    \numberwithin{equation}{section}
    \numberwithin{theorem}{section}
    \spnewtheorem{cor}[theorem]{Corollary}{\bfseries}{\itshape}
    \spnewtheorem{lem}[theorem]{Lemma}{\bfseries}{\itshape}
    \spnewtheorem{prop}[theorem]{Proposition}{\bfseries}{\itshape}
    \spnewtheorem{obs}[theorem]{Observation}{\bfseries}{\itshape}
\renewcommand{\vec}[1]{\mathbf{#1}}
\newcommand{\qedl}{\opt{normal}{}\opt{submission,final}{\qed}}
\newcommand{\ignore}[1]{}
\newcommand{\rest}{\upharpoonright}
\newcommand{\repds}[2]{#2}
\newcommand{\repdd}[2]{#2}
\begin{document}

\title{Deterministic Function Computation with Chemical Reaction Networks\thanks{The first author was supported by the Molecular Programming Project under NSF grant 0832824, the second and third authors were supported by a Computing Innovation Fellowship under NSF grant 1019343. The second author was supported by NSF grants CCF-1219274 and CCF-1162589. The third author was supported by NIGMS Systems Biology Center grant P50 GM081879.}
}

\opt{normal}{
    \author{
        Ho-Lin Chen\thanks{National Taiwan University, Taipei, Taiwan, {\tt holinc@gmail.com}}
        \and
        David Doty\thanks{California Institute of Technology, Pasadena, CA, USA, {\tt ddoty@caltech.edu}}
        \and
        David Soloveichik\thanks{University of California, San Francisco, San Francisco, CA, USA, {\tt david.soloveichik@ucsf.edu}}
    }
    \date{}
}

\opt{submission}{
    \author{Ho-Lin Chen\inst{1}, David Doty\inst{2}, and David Soloveichik\inst{3}}
    \institute{National Taiwan University, Taipei, Taiwan \email{holinc@gmail.com}
    \and
    California Institute of Technology, Pasadena, California, USA \email{ddoty@caltech.edu}
    \and
    University of California, San Francisco, San Francisco, CA, USA \email{david.soloveichik@ucsf.edu}
    }
}
\newcommand{\calN}{\mathcal{N}}
\newcommand{\calE}{\mathcal{E}}
\newcommand{\calY}{\mathcal{Y}}
\newcommand{\leqs}{\leq_\mathrm{s}}
\newcommand{\vc}{\vec{c}}
\newcommand{\vp}{\vec{p}}
\newcommand{\vi}{\vec{i}}
\newcommand{\vx}{\vec{x}}
\newcommand{\vy}{\vec{y}}
\newcommand{\vw}{\vec{w}}
\newcommand{\vz}{\vec{z}}
\newcommand{\vu}{\vec{u}}
\newcommand{\vv}{\vec{v}}
\newcommand{\oX}{\overline{X}}
\renewcommand{\vb}{\vec{b}}
\newcommand{\bfr}{\mathbf{r}}
\newcommand{\bfp}{\mathbf{p}}

\def\longrightharpoonup{\relbar\joinrel\rightharpoonup}
\def\longleftharpoondown{\leftharpoondown\joinrel\relbar}
\def\longrightleftharpoons{\mathop{\vcenter{\hbox{\ooalign{\raise1pt\hbox{$\longrightharpoonup\joinrel$}\crcr\lower1pt\hbox{$\longleftharpoondown\joinrel$}}}}}}
\def\rxn{\mathop{\rightarrow}\limits}  
\def\lrxn{\mathop{\longrightarrow}\limits}
\def\revrxn{\mathop{\rightleftharpoons}\limits}
\def\lrevrxn{\mathop{\longrightleftharpoons}\limits}

\maketitle

\begin{abstract}

Chemical reaction networks (CRNs) formally model chemistry in a well-mixed solution.
CRNs are widely used to describe information processing occurring in natural cellular regulatory networks,
and with upcoming advances in synthetic biology, CRNs are a promising language for the design of artificial molecular control circuitry.
Nonetheless, despite the widespread use of CRNs in the natural sciences,
the range of computational behaviors exhibited by CRNs is not well understood.

CRNs have been shown to be efficiently Turing-universal when allowing for a small probability of error.
CRNs that are guaranteed to converge on a correct answer, on the other hand, have been shown to decide only the semilinear predicates.
We introduce the notion of function, rather than predicate, computation by representing the output of a function $f:\N^k\to\N^l$ by a count of some molecular species, i.e., if the CRN starts with $x_1,\ldots,x_k$ molecules of some ``input'' species $X_1,\ldots,X_k$, the CRN is guaranteed to converge to having $f(x_1,\ldots,x_k)$ molecules of the ``output'' species $Y_1,\ldots,Y_l$.
We show
that a function $f:\N^k \to \N^l$ is deterministically computed by a CRN if and only if its graph $\{ (\vx,\vy) \in \N^k \times \N^l\ |\ f(\vx) = \vy \}$ is a semilinear set.

Finally, we show that each semilinear function $f$ can be computed by a CRN on input $\vx$ in expected time $O(\mathrm{polylog}\ \|\vx\|_1)$.
\end{abstract}



\section{Introduction}
\label{sec-intro}

The engineering of complex artificial molecular systems will require a sophisticated understanding of how to \emph{program chemistry}.
A natural language for describing the interactions of molecular species in a well-mixed solution is that of (finite) chemical reaction networks (CRNs), i.e., finite sets of chemical reactions such as $A + B \to A + C$.
When the behavior of individual molecules is modeled, CRNs are assigned semantics through \emph{stochastic chemical kinetics}~\cite{Gillespie77}, in which reactions occur probabilistically with rate proportional to the product of the molecular count of their reactants and inversely proportional to the volume of the reaction vessel.

Traditionally CRNs have been used as a descriptive language to analyze naturally occurring chemical reactions
(as well as numerous other systems with a large number of interacting components such as
gene regulatory networks and animal populations).
However, recent investigations have viewed CRNs as a programming language for engineering artificial systems.
These works have shown CRNs to have eclectic computational abilities.
Researchers have investigated the power of CRNs to simulate Boolean circuits~\cite{magnasco1997chemical}, neural networks~\cite{hjelmfelt1991chemical}, and digital signal processing~\cite{riedelDSP2012}.
Other work has shown that bounded-space Turing machines can be simulated with an arbitrarily small, non-zero probability of error by a CRN with only a polynomial slowdown \cite{angluin2006fast}.\footnote{
This is surprising since finite CRNs necessarily must represent binary data strings in a unary encoding, since they lack positional information to tell the difference between two molecules of the same species.}
Even Turing universal computation is possible with an arbitrarily small, non-zero probability of error over all time~\cite{SolCooWinBru08}.
The computational power of CRNs also provides insight on why it can be computationally difficult to simulate them~\cite{soloveichik2009robust},
and why certain questions are frustatingly difficult to answer (e.g.\ undecidable)\cite{CooSolWinBru09,zavattaro2008termination}.
The programming approach to CRNs has also, in turn, resulted in novel insights regarding natural cellular regulatory networks~\cite{cardelli2012cell}.

Recent work proposes concrete chemical implementations of arbitrary CRNs, particularly using nucleic-acid strand-displacement cascades as the physical reaction primitive~\cite{SolSeeWin10,cardelli2011strand}.
Thus, since in principle any CRN can be built, hypothetical CRNs with interesting behaviors are becoming of more than theoretical interest.
One day artificial CRNs may underlie embedded controllers for biochemical, nanotechnological, or medical applications, where environments are inherently incompatible with traditional electronic controllers.

One of the best-characterized computational abilities of CRNs is the deterministic computation of predicates as investigated by Angluin, Aspnes and Eisenstat~\cite{AngluinAE06}.
(They considered an equivalent distributed computing model known as \emph{population protocols}.)
Some CRNs, when started in an initial configuration assigning nonnegative integer counts to each of $k$ different input species, are guaranteed to converge on a single ``yes'' or ``no'' answer, in the sense that there are two special ``voting'' species $L^1$ and $L^0$ so that eventually either $L^1$ is present and $L^0$ absent to indicate ``yes'', or vice versa to indicate ``no.''
The set of inputs $S \subseteq \N^k$ that cause the system to answer ``yes'' is then a representation of the decision problem solved by the CRN.
Angluin, Aspnes and Eisenstat showed that the input sets $S$ decidable by some CRN are precisely the \emph{semilinear} subsets of $\N^k$ (see below).

We extend these prior investigations of decision problems or predicate computation to study deterministic \emph{function} computation.
Consider the three examples in Fig.~\ref{fig:examples}(top).
These CRNs have the property that they converge to the right answer no matter the order in which the reactions happen to occur and are thus insensitive to stochastic effects as well as reaction rate constants.
Formally, we say a function $f:\N^k\to\N^l$ is computed by a CRN $\calC$ if the following is true.
There are ``input'' species $X_1,\ldots,X_k$ and ``output'' species $Y_1,\ldots,Y_l$ such that, if $\calC$ is initialized with $x_1,\ldots,x_k$ copies of $X_1,\ldots,X_k$, then it is guaranteed to reach a configuration in which the counts of $Y_1,\ldots,Y_l$ are described by the vector $f(x_1,\ldots,x_k)$, and these counts never again change.
For example, the CRN $\calC$ with the single reaction $X \to 2Y$ computes the function $f(x)=2x$ in the sense that, if $\calC$ starts in an initial configuration with $x$ copies of $X$ and 0 copies of $Y$, then $\calC$ is guaranteed to stabilize to a configuration with $2x$ copies of $Y$ (and no copies of $X$).
Similarly, the function $f(x) = \floor{x/2}$ is computed by the single reaction $2X \to Y$ (Fig.~\ref{fig:examples}(a)), in that the final configuration is guaranteed to have exactly $\floor{x/2}$ copies of $Y$ (and 0 or 1 copies of $X$, depending on whether $x$ is even or odd).

\begin{figure}[t]
\centering
\scalebox{0.55}{\includegraphics{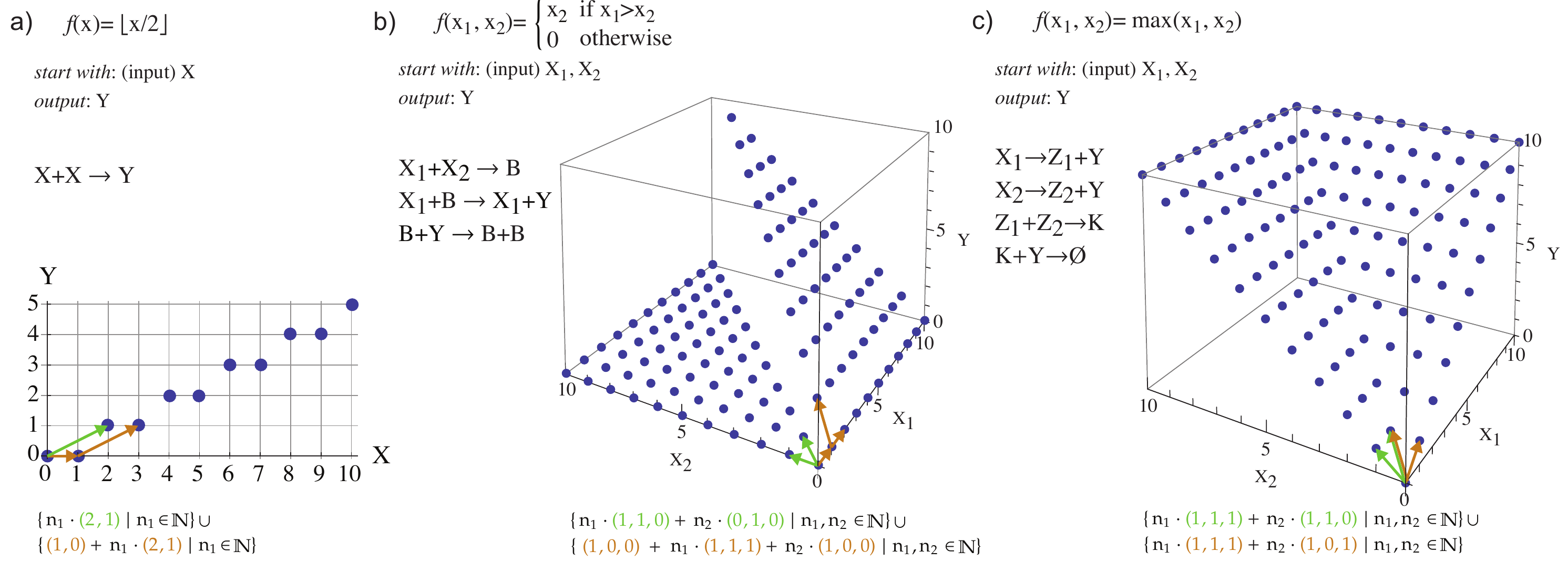}}
\caption{\footnotesize
Examples of deterministically computable functions.
(Top) Three functions and examples of CRNs deterministically computing them.
The input is represented in the molecular count of $X$ (for (a)), and moleculer counts of $X_1$, $X_2$ (for (b) and (c)).
The output is represented by the molecular count of $Y$.
Example (a) computes via the relative stoichiometry of reactants and products of a single reaction.
In example (b),
the second and third reactions convert $B$ to $Y$ and vice versa, catalyzed by $X_1$ and $B$, respectively.
Thus, if there are any $X_1$ remaining after the first reaction finishes (and thus $x_1 > x_2$),
all of $B$ can get converted to $Y$ permanently (since some $B$ is required to convert $Y$ back to $B$).
Since in this case the first reaction produces $x_2$ molecules of $B$,
$x_2$ molecules of the output $Y$ are eventually produced.
If the first reaction consumes all of $X_1$ (and thus $x_1 \leq x_2$),
eventually any $Y$ that was produced in the second reaction gets converted to $B$ by the third reaction.
To see that the CRN in (c) correctly computes the maximum, note that the first two reactions eventually produce $x_1+x_2$ molecules of $Y$, while the third reaction eventually produces $\min(x_1,x_2)$ molecules of $K$.
Thus the last reaction eventually consumes $\min(x_1,x_2)$ molecules of $Y$ leaving $x_1 + x_2 - \min(x_1,x_2) = \max(x_1,x_2)$ $Y$'s.
(Bottom) Graphs of the three functions.
The set of points belonging to the graph of each of these functions is a semilinear set.
Under each plot this semilinear set is written in the form of a union of linear sets corresponding to Equation~\ref{eq:linear}.
The defining vectors are shown as colored arrows in the graph.
\label{fig:examples}
}
\end{figure}


It is illuminating to compare the computation of division by $2$ shown in Fig.~\ref{fig:examples}(a) with another reasonable alternative:
reactions $X \to Y$ and $Y \to X$ (i.e.\ the reversible reaction $X \revrxn Y$).
If the rate constants of the two reactions are equal, the system equilibrium is at half of the initial amount of $X$ transformed to $Y$.
There are two stark differences between this implementation and that of Fig.~\ref{fig:examples}(a).
First, this CRN would not have an exact output count of $Y$, but rather a distribution around the equilibrium. (However, in the limit of large numbers, the error as a fraction of the total would converge to zero.)
The second difference is that the equilibrium amount of $Y$ for any initial amount of $X$ depends on the relative rate constants of the two reactions.
In contrast, the deterministic computation discussed in this paper relies on the identity and stoichiometry of the reactants and products rather than the rate constants.
While the rates of reactions are analog quantities, the identity and stoichiometry of the reactants and products are naturally digital.
Methods for physically implementing CRNs naturally yield systems with digital stoichiometry that can be set exactly~\cite{SolSeeWin10,cardelli2011strand}.
While rate constants can be tuned, being analog quantities, it cannot be expected that they can be controlled precisely.

A few general properties of this type of deterministic computation can be inferred.
The first property is that a deterministic CRN is able to handle input molecules added at any time, and not just initially.
Otherwise, if the CRN could reach a state after which it no longer ``accepts input'',
then there would be a sequence of reactions that would lead to an incorrect output even if all input is present initially.
This reaction sequence is one in which some input molecules remain unreacted while the CRNs goes to a state in which input is no longer accepted -- which is always possible.

The second general property of deterministic computation relates to composition.
As any bona fide computation must be composable, it is important to ask: can the output of one deterministic CRN be the input to another?
This is more difficult than in standard computing since there is in general no way of knowing when a CRN is done computing, or whether it will change its answer in the future.
This is essentially because a CRN cannot deterministically detect the absence of a species, and thus, for example, cannot discern when all input has been read.
Moreover, simply concatenating  two deterministic CRNs (renaming species to avoid conflict) does not always yield a deterministic CRN.
For example, consider computing the function $f(x_1, x_2) = \lfloor \max(x_1, x_2) / 2 \rfloor$ by composing the CRNs in Fig.~\ref{fig:examples}(c) and (a).
The new CRN is:
\begin{eqnarray*}
X_1 & \rxn & Z_1 + W \\
X_2 & \rxn & Z_2 + W \\
Z_1 + Z_2 & \rxn & K \\
K + W & \rxn & \emptyset \\
W + W & \rxn & Y
\end{eqnarray*}
where $W$ is the output species of the max computation, that acts as the input to the division by 2 computation.
Note that if $W$ happens to be converted to $Y$ by the last reaction before it reacts with $K$,
then the system can converge to a final output value of $Y$ that is larger than expected.
In other words, because the first CRN needs to consume its output $W$,
the second CRN can interfere by consuming $W$ itself (in the process of reading it out).

Unlike in the above example, two deterministic CRNs \emph{can} be simply concatenated to make a new deterministic CRN if the first CRN never consumes its output species (i.e.\ it produces its output ``monotonically'').
Since it doesn't matter when the input to the second CRN is produced (the first property, above), the overall computation will be correct.
Yet deterministically computing a non-monotonic function without consuming output species is impossible because the CRN must be able to handle some input molecules reacting only after the output has already been produced (i.e.\ the first property, above).
In a couple of places in this paper, we convert a non-monotonic function into a monotonic one over more outputs, to allow the result to be used by a downstream CRN (see below).

What do the functions in Fig.~\ref{fig:examples}(top) have in common such that the CRNs computing them can inevitably progress to the right answer no matter what order the reactions occur in?
What other functions can be computed similarly?
Answering these questions may seem difficult because it appears like the three examples, although all deterministic,
operate on different principles and seem to use different ideas.

We show that the functions deterministically computable by CRNs are precisely the semilinear functions,
where we define a function to be \emph{semilinear} if its \emph{graph} $\{ (\vx,\vy) \in \N^k \times \N^l\ |\ f(\vx) = \vy \}$ is a semilinear subset of $\N^k \times \N^l$.
(See Fig.~\ref{fig:examples}(bottom) for the graphs of the three example functions.)
This means that the graph of the function is a union of a finite number of \emph{linear} sets -- i.e.\ sets that can be written in the form
\begin{equation}  \label{eq:linear}
\setl{\vec{b} + n_1 \vec{u}_1 + \ldots + n_p \vec{u}_p}{n_1,\ldots,n_p\in\N}
\end{equation}
for some  fixed vectors
$\vec{b},\vec{u}_1,\ldots,\vec{u}_p \in\N^{k+l}$.
Fig.~\ref{fig:examples}(bottom) shows the graphs of the three example functions expressed as a union of sets of this form.
Informally, semilinear functions can be thought of as  ``piecewise linear functions'' with a finite number of pieces, and linear domains of each piece.\footnote{Semilinear sets have a number of characterizations.
They are often thought of as generalizations of arithmetic progressions.
They are also exactly the sets that are definable in Presburger arithmetic~\cite{Presburger30}: the first-order theory of the natural numbers with addition.
Equivalently, they are  the sets accepted by boolean combinations of ``modulo" and ``threshold" predicates~\cite{AngluinAE06}.
Semilinear functions are less well-studied.
The ``piecewise linear'' intuitive characterization  is  formalized in Lemma~\ref{lem-semilinear-function-finite-union-linear}.
}

This characterization implies, for example, that such functions as $f(x_1,x_2) = x_1 x_2$,  $f(x) = x^2$, or $f(x) = 2^x$ are not deterministically computable.
For instance, the graph of the function $f(x_1,x_2) = x_1 x_2$ consists of infinitely many lines of different slopes, and thus, while each line is a linear set, the graph is not a finite union of linear sets.
Our result employs the predicate computation characterization of Angluin, Aspnes and Eisenstat~\cite{AngluinAE06}, together with some nontrivial additional technical machinery.



\begin{figure}[t]
\centering
\scalebox{0.45}{\includegraphics{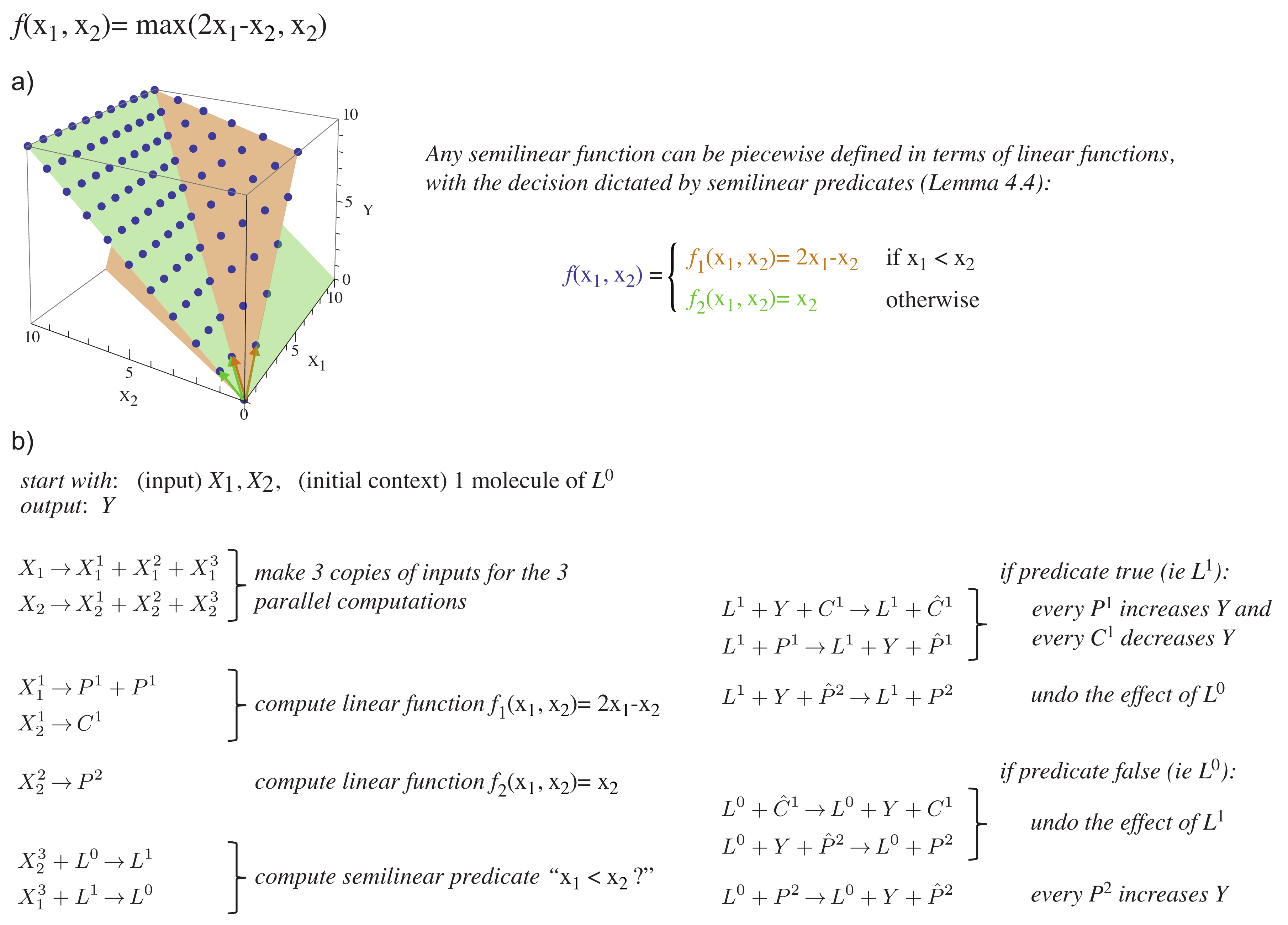}}
\caption{\footnotesize
An example capturing the essential elements of our systematic construction for computing semilinear functions (Lemma \ref{lem-compute-semilinear-n-log-n}).
To compute the target semilinear function,
we recast it as a piecewise function defined in terms of linear functions,
such that semilinear predicates can decide which of the linear functions is applicable for a given input (this recasting is possible by Lemma~\ref{lem-semilinear-function-finite-union-linear}).
(a) The graph of the target function visualizing the decomposition into linear functions.
(b) A CRN deterministically computing the target function with intuitive explanations of the reactions.
We use tri-molecular reactions for simplicity of exposition;
however, these can be converted into a sequence of bimolecular reactions.
Note that we allow an ``initial context": a fixed set of molecules that are always present in the initial state in addition to the input.
The linear functions $f_1$ and $f_2$ are computed monotonically by representing the output as the difference of $P$ (``produce'') minus $C$ (``consume'') species.
Thus although $P^1 - C^1$ could be changing non-monotonically, $P^1$ and $C^1$ do not decrease over time, allowing them to be used as inputs for downstream computation.
To compute the semilinear predicate ``$x_1 < x_2$?'', a single molecule, converted between $L^0$ and $L^1$ forms, goes back and forth consuming $X_1^3$ and $X_2^3$.
Whether it gets stuck in the $L^0$ or $L^1$ forms indicates the excess of $X_1^3$ or $X_2^3$.
The reactions in the right column use the output of this predicate computation to set the count of $Y$ (the global output) to either the value computed by $f_1$ or $f_2$.
Note that the CRN cannot ``know'' when the predicate computation has finished since the absence $X_1^3$ or $X_2^3$ cannot be detected.
Thus the reactions in the right column must be capable of responding to a change in $L^0/L^1$.
Species $\hat{P}^1$, $\hat{P}^2$, and  $\hat{C}^1$ are used to backup the values of $P^1$, $P^2$, and  $C^1$, enabling the switch in output.
\label{fig:systematicexample}
}
\end{figure}

While the example CRNs in Fig.~\ref{fig:examples} all seem to use different ``tricks'',
in Section~\ref{sec:detfuncspeed} we develop a systematic construction for any semilinear function.
To get the gist of this construction see the example in Fig.~\ref{fig:systematicexample}.
To obtain a CRN computing the example semilinear function $f(x_1, x_2) = \max(2 x_1 - x_2, x_2)$, we decompose the function into ``linear'' pieces: $f_1(x_1, x_2) = 2 x_1 - x_2$ and $f_2(x_1, x_2) = x_2$ (formally partial affine functions, see Section~\ref{sec-prelim}).
Then semilinear predicate computation (per Angluin, Aspnes and Eisenstat) is used to decide which linear function should be applied to a given input.
A decomposition compatible with this approach is always possible by Lemma~\ref{lem-semilinear-function-finite-union-linear}.
Linear functions such as $f_1$ and $f_2$ are easy for CRNs to deterministically compute by the relative stoichiometry of the reactants and products (analogously to the example in Fig.~\ref{fig:examples}(a)).
However, note that to correctly compose the computation of $f_1$ with the downstream computation (Fig.~\ref{fig:examples}(b), right column) we convert $f_1$ from a non-monotonic function with one output, to a monotonic function with two outputs such that the original output is encoded by their difference.

In the last part of this paper, we turn our attention to the time required for CRNs to converge to the answer. 
We show that every semilinear function can be deterministically computed on input $\vx$ in expected time $\mathrm{polylog}(\|\vx\|)$.
This is done by a similar technique used by Angluin, Aspnes, and Eisenstat~\cite{AngluinAE06} to show the equivalent result for predicate computation.
They run a slow deterministic computation in parallel with a fast randomized computation, allowing the deterministic computation to compare the two answers and update the randomized answer only if it is incorrect, which happens with low probability.
However, novel techniques are required since it is not as simple to ``nondestructively compare'' two integers (so that the counts are only changed if they are unequal) as to compare two Boolean values.



\section{Preliminaries}
\label{sec-prelim}


Given a vector $\vx\in\N^k$, let $\|\vx\| = \|\vx\|_1 = \sum_{i=1}^k |\vx(i)|$, where $\vx(i)$ denotes the $i$th coordinate of $\vx$.
A set $A \subseteq \N^k$ is \emph{linear} if there exist vectors $\vec{b},\vec{u}_1,\ldots,\vec{u}_p \in\N^k$ such that
$$A=\setl{\vec{b} + n_1 \vec{u}_1 + \ldots + n_p \vec{u}_p}{n_1,\ldots,n_p\in\N}.$$
$A$ is \emph{semilinear} if it is a finite union of linear sets.
If $f:\N^k\to\N^l$ is a function, define the \emph{graph} of $f$ to be the set
$\setl{ (\vx,\vy) \in \N^k \times \N^l }{ f(\vx) = \vy }.$
A function is \emph{semilinear} if its graph is a semilinear set.

We say a partial function $f:\N^k \dashrightarrow \N^l$ is \emph{affine} if there exist $kl$ rational numbers $a_{1,1},\ldots,a_{k,l}\in\Q$ and $l+k$ nonnegative integers $b_1,\ldots,b_l,c_1,\ldots,c_k\in\N$ such that, if $\vy = f(\vx)$, then for each $j \in \{1,\ldots,l\}$, $\vy(j) = b_j + \sum_{i=1}^k a_{i,j} (\vx(i) - c_i)$, and for each $i \in \{1,\ldots,k\}$, $\vx(i) - c_i \geq 0$.
(In matrix notation, there exist a $k \times l$ rational matrix $\vec{A}$ and vectors $\vb \in \N^l$ and $\vec{c} \in \N^k$ such that $f(\vx) = \vec{A} (\vx-\vec{c}) + \vb$.)
In other words, the graph of $f$, when projected onto the $(k+1)$-dimensional space defined by the $k$ coordinates corresponding to $\vx$ and the single coordinate corresponding to $\vy(j)$, is a subset of a $k$-dimensional hyperplane.

Four aspects of the definition of affine function invite explanation.

First, we allow partial functions because Lemma~\ref{lem-semilinear-function-finite-union-linear} characterizes the semilinear functions as finite combinations of affine functions, where the union of the domains of the functions is the entire input space $\N^k$.
The value of an affine function on an input outside of its domain is irrelevant (and in fact may be non-integer).

Second, we have two separate ``constant offsets'' $b_j$ and $c_i$.
Affine functions over the reals are typically defined with only one of these, $b_j$, where a function $f:\R^k\to\R^l$ is affine if there is a $k \times l$ real matrix $\vec{A}$ and vector $\vb \in \R^l$ such that $f(\vx) = \vec{A} \vx + \vb$.
If instead real affine functions were defined as $f(\vx) = \vec{A} (\vx - \vec{c}) + \vb$, one could re-write this as $f(\vx) = \vec{A} \vx - \vec{A} \vec{c} + \vb$ and, letting $\vb' = - \vec{A} \vec{c} + \vb$, obtain an affine function according to the former definition.
However, if we take this approach in dealing with integers, it may be that while $\vec{A} \vx - \vec{A} \vec{c} + \vb$ is integer-valued, the terms $\vec{A} \vx$ and $- \vec{A} \vec{c} + \vb$ are non-integer vectors, and when we compute affine functions with chemical reaction networks, these terms are handled separately by integer-valued counts of molecules.

Third, it may seem overly restrictive to require $b_j$ and $c_i$ to be nonnegative.
In fact, our proof of Lemma~\ref{lem-compute-linear-fast} is easily modified to show how to construct a CRN to compute an affine function that allows negative values for $b_j$ and $c_i$.
However, Lemma~\ref{lem-semilinear-function-finite-union-linear} shows that, when the function is such that its graph is a nonnegative linear set, then we may freely assume that $b_j$ and $c_i$ to be nonnegative.
Since this simplifies some of our definitions, we use this convention, although it is not a strictly necessary assumption to prove computability of affine functions by chemical reaction networks.

Fourth, the requirement that $\vx(i) - c_i \geq 0$ seems artificial.
When we prove that every semilinear function can be written as a finite union of partial affine functions with linear graphs (Lemma~\ref{lem-semilinear-function-finite-union-linear}), however, this will follow from the fact that the ``offset vector'' in the definition of a linear set is required to be nonnegative.

Note that by appropriate integer arithmetic, a partial function $f:\N^k \dashrightarrow \N^l$ is affine if and only if there exist $kl$ integers $n_{1,1},\ldots,n_{k,l}\in\Z$ and $2l+k$ nonnegative integers $b_1,\ldots,b_l,c_1,\ldots,c_k,d_1,\ldots,d_l\in\N$ such that, if $\vy = f(\vx)$, then for each $j \in \{1,\ldots,l\}$,
$\vy(j) = b_j + \frac{1}{d_j} \sum_{i=1}^k n_{i,j} (\vx(i) - c_i)$, and for each $i \in \{1,\ldots,k\}$, $\vx(i) - c_i \geq 0$.
Each $d_j$ may be taken to be the least common multiple of the denominators of the rational coefficients in the original definition.
We employ this latter definition, since it is more convenient for working with integer-valued molecular counts.

\subsection{Chemical reaction networks}

If $\Lambda$ is a finite set (in this paper, of chemical species), we write $\N^\Lambda$ to denote the set of functions $f:\Lambda \to \N$.
Equivalently, we view an element $\vc\in\N^\Lambda$ as a vector of $|\Lambda|$ nonnegative integers, with each coordinate ``labeled'' by an element of $\Lambda$.
Given $X\in \Lambda$ and $\vc \in \N^\Lambda$, we refer to $\vc(X)$ as the \emph{count of $X$ in $\vc$}.
We write $\vc \leq \vc'$ to denote that $\vc(X) \leq \vc'(X)$ for all $X \in \Lambda$.
Given $\vc,\vc' \in \N^\Lambda$, we define the vector component-wise operations of addition $\vc+\vc'$, subtraction $\vc-\vc'$, and scalar multiplication $n \vc$ for $n \in \N$.
If $\Delta \subset \Lambda$, we view a vector $\vc \in \N^\Delta$ equivalently as a vector $\vc \in \N^\Lambda$ by assuming $\vc(X)=0$ for all $X \in \Lambda \setminus \Delta.$

Given a finite set of chemical species $\Lambda$, a \emph{reaction} over $\Lambda$ is a triple $\alpha = \langle \bfr,\bfp,k \rangle \in \N^\Lambda \times \N^\Lambda \times \R^+$, specifying the stoichiometry of the reactants and products, respectively, and the \emph{rate constant} $k$.
If not specified, assume that $k=1$ (this is the case for all reactions in this paper), so that the reaction $\alpha=\langle \bfr,\bfp,1 \rangle$ is also represented by the pair $\pair{\bfr}{\bfp}.$
For instance, given $\Lambda=\{A,B,C\}$, the reaction $A+2B \to A+3C$ is the pair $\pair{(1,2,0)}{(1,0,3)}.$
A \emph{(finite) chemical reaction network (CRN)} is a pair $N=(\Lambda,R)$, where $\Lambda$ is a finite set of chemical \emph{species},
and $R$ is a finite set of reactions over $\Lambda$.
A \emph{configuration} of a CRN $N=(\Lambda,R)$ is a vector $\vc \in \N^\Lambda$.
We also write $\#_\vc X$ to denote $\vc(X)$, the \emph{count} of species $X$ in configuration $\vc$, or simply $\# X$ when $\vc$ is clear from context.

Given a configuration $\vc$ and reaction $\alpha=\pair{\bfr}{\bfp}$, we say that $\alpha$ is \emph{applicable} to $\vc$ if $\bfr \leq \vc$ (i.e., $\vc$ contains enough of each of the reactants for the reaction to occur).
If $\alpha$ is applicable to $\vc$, then write $\alpha(\vc)$ to denote the configuration $\vc + \bfp - \bfr$ (i.e., the configuration that results from applying reaction $\alpha$ to $\vc$).
If $\vc'=\alpha(\vc)$ for some reaction $\alpha \in R$, we write $\vc \to_N \vc'$, or merely $\vc \to \vc'$ when $N$ is clear from context.
An \emph{execution} (a.k.a., \emph{execution sequence}) $\calE$ is a finite or infinite sequence of one or more configurations $\calE = (\vc_0, \vc_1, \vc_2, \ldots)$ such that, for all $i \in \{1,\ldots,|\calE|-1\}$, $\vc_{i-1} \to \vc_{i}$.
If a finite execution sequence starts with $\vc$ and ends with $\vc'$, we write $\vc \to_N^* \vc'$, or merely $\vc \to^* \vc'$ when the CRN $N$ is clear from context.
In this case, we say that $\vc'$ is \emph{reachable} from $\vc$.

Let $\Delta \subseteq \Lambda$.
We say that $\vp\in\N^{\Delta}$ is a \emph{partial configuration (with respect to $\Delta$)}.
We write $\vp = \vc \rest \Delta$ for any configuration $\vc$ such that $\vc(X)=\vp(X)$ for all $X \in \Delta$, and we say that $\vp$ is the \emph{restriction of $\vc$ to $\Delta$}.
Say that a partial configuration $\vp$ with respect to $\Delta$ is \emph{reachable} from configuration $\vc'$ if there is a configuration $\vc$ reachable from $\vc'$ and $\vp = \vc \rest \Delta$.
In this case, we write $\vc' \to^* \vp$.

Turing machines, for example, have different semantic interpretations depending on the computational task under study (deciding a language, computing a function, etc.).
Similarly, in this paper we use CRNs to decide subsets of $\N^k$ and to compute functions $f:\N^k\to\N^l$.
In the next two subsections we define two semantic interpretations of CRNs that correspond to these two tasks.

\subsection{Stable decidability of predicates} \label{subsec-prelim-pred}
We now review the definition of stable decidability of predicates introduced by Angluin, Aspnes, and Eisenstat~\cite{AngluinAE06}.%
\footnote{Those authors use the term ``stably \emph{compute}'', but we reserve the term ``compute'' to apply to the computation of non-Boolean functions.}
Intuitively, some species ``vote'' for a yes/no answer
and the system stabilizes to an output when a consensus is reached and it can no longer change its mind.
The determinism of the system is captured in that it is impossible to stabilize to an incorrect answer, and the correct stable output is always reachable.

A \emph{chemical reaction decider} (CRD) is a tuple $\calD=(\Lambda,R,\Sigma,\Upsilon,\phi,\sigma)$, where $(\Lambda,R)$ is a CRN, $\Sigma \subseteq \Lambda$ is the \emph{set of input species}, $\Upsilon \subseteq \Lambda$ is the set of \emph{voters}\footnote{The definitions of~\cite{AngluinAE06} assume that $\Upsilon = \Lambda$ (i.e., every species votes).
However, it is not hard to show that we may assume there are only two voting species, $L^0$ and $L^1$, so that $\# L^0 > 0$ and $\# L^1 = 0$ means that the CRD is answering ``no'', and $\# L^0 = 0$ and $\# L^1 > 0$ means that the CRD is answering ``yes.''
This convention will be more convenient in this paper.
}, $\phi:\Upsilon\to\{0,1\}$ is the \emph{(Boolean) output function}, and $\sigma \in \N^{\Lambda \setminus \Sigma}$ is the \emph{initial context}.
An input to $\calD$ will be a vector $\vi_0\in\N^\Sigma$ (equivalently, $\vi_0\in\N^k$ if we write $\Sigma = \{X_1,\ldots,X_k\}$ and assign $X_i$ to represent the $i$'th coordinate).
Thus a CRD together with an input vector defines an initial configuration $\vi$ defined by $\vi(X) = \vi_0(X)$ if $X \in \Sigma$, and $\vi(X) = \sigma(X)$ otherwise.
We say that such a configuration is a \emph{valid initial configuration}, i.e., $\vi \rest (\Lambda \setminus \Sigma) = \sigma$.
If we are discussing a CRN understood from context to have a certain initial configuration $\vi$, we write $\#_0 X$ to denote $\vi(X)$.

We extend $\phi$ to a partial function $\Phi:\N^\Lambda \dashrightarrow \{0,1\}$ as follows.
$\Phi(\vc)$ is undefined if either $\vc(X) = 0$ for all $X \in \Upsilon$, or if there exist $X_0,X_1 \in \Upsilon$ such that $\vc(X_0) > 0$, $\vc(X_1) > 0$, $\phi(X_0)=0$ and $\phi(X_1)=1$.
Otherwise, there exists $b\in\{0,1\}$ such that $(\forall X \in \Upsilon) (\vc(X)>0 \implies \phi(X)=b)$; in this case, the \emph{output} $\Phi(\vc)$ of configuration $\vc$ is $b$.

A configuration $\vc$ is \emph{output stable} if $\Phi(\vc)$ is defined and, for all $\vc'$ such that $\vc \to^* \vc'$, $\Phi(\vc') = \Phi(\vc)$.
We say a CRD $\calD$ \emph{stably decides} the predicate $\psi:\N^\Sigma \to \{0,1\}$ if,
for any valid initial configuration $\vi \in \N^\Lambda$ with $\vi \upharpoonright \Sigma = \vi_0$, for all configurations $\vc\in\N^\Lambda$,
$\vi \to^* \vc$ implies $\vc\to^*\vc'$ such that $\vc'$ is output stable and $\Phi(\vc') = \psi(\vi_0)$.
Note that this condition implies that no incorrect output stable configuration is reachable from $\vi$.
We say that $\calD$ \emph{stably decides} a set $A\in\N^k$ if it stably decides its indicator function.

The following theorem is due to Angluin, Aspenes, and Eisenstat~\cite{AngluinAE06}:

\begin{theorem}[\cite{AngluinAE06}] \label{thm-semilinear}
A set $A \subseteq \N^k$ is stably decidable by a CRD if and only if it is semilinear.
\end{theorem}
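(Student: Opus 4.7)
The plan is to prove both directions of the equivalence.

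For the easier \emph{if} direction (semilinearity implies stable decidability), I would first build CRDs for the two kinds of atomic predicates that generate Presburger arithmetic: thresholds $x_i \geq c$ and modular predicates $x_i \equiv r \pmod{m}$. A threshold can be decided by pre-loading $c$ copies of a ``quencher'' species $Q$ in the initial context $\sigma$, together with reactions $X_i + Q \rxn L^0$ and $X_i + L^0 \rxn L^1$ (plus bookkeeping to keep exactly one voting token); once the $c$ quenchers are consumed, every subsequent $X_i$ flips the token to $L^1$, and otherwise it stays at $L^0$. A modular predicate can be decided by cycling a single voting token $S_0, S_1, \ldots, S_{m-1}$ via reactions $X_i + S_j \rxn S_{(j+1) \bmod m}$, with $\phi(S_r) = 1$ and $\phi(S_j) = 0$ otherwise. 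Next I would show closure under Boolean operations by a product construction on the voting species (replace each voter $L$ of $\calD_1$ and $L'$ of $\calD_2$ by joint voters $L \otimes L'$, translating each original reaction into a family of reactions over the product). Finally, since every semilinear set is definable in Presburger arithmetic, and (by Ginsburg--Spanier / the well-known quantifier-elimination style result) every Presburger formula is equivalent to a Boolean combination of threshold and modular atoms, the \emph{if} direction follows.

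For the \emph{only if} direction (the hard part), I would fix a CRD $\calD$ stably deciding a predicate $\psi$, and argue that $\psi^{-1}(1) \subseteq \N^k$ is semilinear. The two ingredients are (i) Dickson's lemma, which says that the componentwise order on $\N^d$ is a well-partial-order, so every subset of $\N^d$ has finitely many minimal elements; and (ii) a ``pumping'' lemma for reachability in CRNs: if $\vc \to^* \vc'$ and $\vd \geq \vc$, then $\vd \to^* \vd + (\vc' - \vc)$ by applying exactly the same reaction sequence (monotonicity of reachability). The plan is to characterize the YES set as follows. For each minimal YES input $\vb$ (finitely many by Dickson), the correct stable YES configuration reachable from $\vb + \sigma$ involves some execution sequence $\calE_{\vb}$. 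Consider the set of ``pumping vectors'' $U_\vb \subseteq \N^k$ of inputs $\vu$ such that $\vb + \vu$ is also YES, by showing that one can interleave extra reactions in $\calE_{\vb}$ to consume the added $\vu$ tokens while preserving the stable YES answer. The core technical lemma is that this $U_\vb$ is itself semilinear: one uses Dickson on configurations encountered during executions to extract a finite basis of ``pumpable cycles,'' i.e., reaction loops $\vc \to^* \vc$ that net-consume some input increment.

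The main obstacle, and where I expect to spend most of the effort, is the following subtlety in the \emph{only if} direction: adding extra input molecules to a YES configuration can in principle enable \emph{new} reaction sequences that lead to a reachable NO-stable configuration, which by stable-decision correctness would contradict YES. I would need a careful case analysis of stable-NO configurations reachable from augmented inputs, likely proved by contradiction: if infinitely many incomparable counterexamples existed, Dickson would produce a chain $\vd_1 \leq \vd_2 \leq \ldots$ of augmented inputs each leading to stable-NO, and the monotonic-reachability lemma would transport a stable-NO reachability back to witness that the original minimal input $\vb$ has a reachable NO stable configuration, contradicting that $\psi(\vb) = 1$. Once $U_\vb$ is shown to be a semilinear cone containing $\vec{0}$, and $\psi^{-1}(1) = \bigcup_\vb (\vb + U_\vb')$ over finitely many minimal $\vb$, semilinearity of the YES set follows from the closure of semilinear sets under finite unions and translations.

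I would then remark that by symmetry (or by negating the voter function $\phi$), the NO set is also semilinear, which is consistent since semilinear sets are closed under complement. This completes the cycle of equivalences and gives the theorem.
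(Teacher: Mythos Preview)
The paper does not prove this theorem; it is quoted as a result of Angluin, Aspnes, and Eisenstat~\cite{AngluinAE06}, so there is no ``paper's own proof'' to compare against. That said, your outline has genuine gaps in both directions.

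\textbf{If direction.} Your atomic predicates are single-variable: thresholds $x_i \geq c$ and congruences $x_i \equiv r \pmod m$. Boolean combinations of these define only sets that are finite unions of axis-aligned ``boxes modulo periodicity'' and cannot capture, e.g., the semilinear set $\{(x_1,x_2)\in\N^2 : x_1 \geq x_2\}$. The characterization you need is that semilinear sets are Boolean combinations of \emph{linear} threshold predicates $\sum_i a_i x_i \geq c$ and \emph{linear} modular predicates $\sum_i a_i x_i \equiv r \pmod m$ with $a_i\in\Z$. You therefore also need CRDs for those, which requires a different (though still elementary) construction than the ones you sketch. Your concrete threshold gadget is also broken as written: after $X_i+Q\to L^0$ and $X_i+L^0\to L^1$ run to quiescence there can be both $L^0$ and $L^1$ present, and ``plus bookkeeping'' is doing all the work.

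\textbf{Only if direction.} The ingredients you name (Dickson's lemma, additivity of reachability) are indeed the right ones, and they are what \cite{AngluinAE06} uses. But your sketch does not establish the key claim. First, your definition of $U_\vb$ conflates two different things: the set of $\vu$ with $\psi(\vb+\vu)=1$, and the set of $\vu$ for which the \emph{same} execution $\calE_\vb$ can be patched to witness YES; these need not coincide. Second, ``pumpable cycles $\vc\to^*\vc$ that net-consume some input increment'' is self-contradictory (if the loop returns to $\vc$ it net-consumes nothing), and even a cleaned-up version does not obviously yield semilinearity of $U_\vb$: you would get that $U_\vb$ contains some linear subset, not that it equals one. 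Third, your contradiction argument for the ``main obstacle'' runs monotonicity the wrong way: from a stable-NO configuration reachable from the larger input $\vd_2$ you cannot transport reachability back down to $\vd_1\leq\vd_2$; additivity of $\to^*$ only lets you add molecules, not remove them. The actual argument in \cite{AngluinAE06} goes through a truncation lemma (output-stability is determined by a bounded-count truncation of the configuration, cf.\ the use of this fact in the present paper's Lemma on fair executions) together with a careful monotonicity argument on stable configurations, not on executions from minimal inputs.
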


The model they use is defined in a slightly different way.
They study \emph{population protocols}, a distributed computing model in which a fixed-size set of agents, each having a state from a finite set, undergo successive pairwise interactions, the two agents updating their states upon interacting.
This is equivalent to chemical reaction networks in which all reactions have exactly two reactants and two products.


In fact, the forward direction of Theorem~\ref{thm-semilinear} (every stably decidable set is semilinear) holds even if stable computation is defined with respect to \emph{any} relation $\to^*$ on $\N^k$ that is reflexive, transitive, and ``respects addition'', i.e., [$(\forall \vc_1,\vc_2,\vx\in\N^k)\ (\vc_1 \to^* \vc_2) \implies (\vc_1 + \vx \to^* \vc_2 + \vx$)].
These properties can easily be shown to hold for the CRN reachability relation.
The third property, in particular, means that if some molecules $\vc_1$ can react to form molecules $\vc_2$, then it is possible for them to react in the presence of some extra molecules $\vx$, such that no molecules from $\vx$ react at all.


\subsection{Stable computation of functions}

We now define a notion of stable computation of \emph{functions} similar to those above for predicates.\footnote{The extension from Boolean predicates to functions described by Aspnes and Ruppert~\cite{aspnes2007introduction} applies only to finite-range functions, where one can choose $|\Lambda| \geq |Y|$ for output range $Y$.}
Intuitively, the inputs to the function are the initial counts of input species $X_1,\ldots,X_k$, and the outputs are the counts of ``output" species $Y_1,\ldots,Y_l$.
The system stabilizes to an output when the counts of the output species can no longer change.
Again determinism is captured in that it is impossible to stabilize to an incorrect answer and the correct stable output is always reachable.

Let $k,l\in\Z^+$.
A \emph{chemical reaction computer (CRC)} is a tuple $\calC=(\Lambda,R,\Sigma,\Gamma,\sigma)$, where $(\Lambda,R)$ is a CRN, $\Sigma \subset \Lambda$ is the \emph{set of input species}, $\Gamma \subset \Lambda$ is the \emph{set of output species}, such that $\Sigma \cap \Gamma = \emptyset$, $|\Sigma|=k$, $|\Gamma|=l$, and $\sigma \in \N^{\Lambda\setminus\Sigma}$ is the \emph{initial context}.
Write $\Sigma = \{X_1,X_2,\ldots,X_k\}$ and $\Gamma = \{Y_1,Y_2,\ldots,Y_l\}$.
We say that a configuration $\vc$ is \emph{output count stable} if, for every $\vc'$ such that $\vc \to^* \vc'$ and every $Y_i \in \Gamma$, $\vc(Y_i) = \vc'(Y_i)$ (i.e., the counts of species in $\Gamma$ will never change if $\vc$ is reached).
As with CRD's, we require initial configurations $\vi$ of $\calC$ with input $\vi_0\in\N^\Sigma$ to obey $\vi(X)=\vi_0(X)$ if $X\in\Sigma$ and $\vi(X)=\sigma(X)$ otherwise, calling them \emph{valid initial configurations}.
We say that $\calC$ \emph{stably computes} a function $f:\N^k\to\N^l$ if for any valid initial configuration $\vi \in \N^{\Lambda}$,
$\vi \to^* \vc$ implies $\vc\to^*\vc'$ such that $\vc'$ is an output count stable configuration with $f(\vi(X_1), \vi(X_2), \ldots, \vi(X_k)) = (\vc'(Y_1), \vc'(Y_2), \ldots, \vc'(Y_l))$.
Note that this condition implies that no incorrect output stable configuration is reachable from $\vi$.

As an example of a formally defined CRC consider the function $f(x) =   \floor{x/2}$ shown in Fig.~\ref{fig:examples}(a).
This function is stably computed by the CRC $(\Lambda,R,\Sigma,\Gamma,\sigma)$ where $(\Lambda,R)$ is the CRN consisting of a single reaction $2 X \to Y$, $\Sigma = \{X\}$ is the set of input species, $\Gamma = \{Y\}$ is the set of output species, and the initial context $\sigma$ is zero for all species in $\Lambda\setminus\Sigma$.
In Fig.~\ref{fig:examples}(b) the initial context $\sigma(N)=1$, and is zero for all other species in in $\Lambda\setminus\Sigma$.
\repds{}{In examples (a) and (b), there is at most one reaction that can happen in any reachable configuration.
In contrast, different reactions may occur next in (c).
However, from any reachable state, we can reach the output count stable configuration with the correct amount of $Y$, satisfying our definition of stable computation.}

In Sections~\ref{sec:detfunc} and \ref{sec:detfuncspeed} we will describe systematic (but much more complex) constructions for these and all functions with semilinear graphs.

\subsection{Fair execution sequences}

Note that by defining ``deterministic" computation in terms of certain states being reachable and others not, we cannot guarantee the system will get to the correct output \emph{for any possible execution sequence}.
For example suppose an adversary controls the execution sequence.
Then $\{X \to 2Y, A \to B, B \to A\}$ will not reach the intended output state $y = 2x$ if the adversary simply does not let the first reaction occur, always preferring the second or third.

Intuitively, in a real chemical mixture, the reactions are chosen randomly and not adversarially, and the CRN will get to the correct output.
In this section we follow Angluin, Aspnes, and Eisenstat~\cite{AngluinAE06} and define a combinatorial condition called fairness on execution sequences that captures what is minimally required of the execution sequence to be guaranteed that a stably deciding/computing CRD/CRC will reach the output stable state.
In the next section we consider the  kinetic model, which ascribes probabilities  to execution sequences.
The kinetic model also defines the time of reactions, allowing us to study the computational complexity of the CRN computation.
Note that in the kinetic model, if the reachable configuration space is bounded for any start configuration (i.e.\ if from any starting configuration there are finitely many configurations reachable) then any observed execution sequence will be fair with probability 1.
(This will be the case for our construction in Section~\ref{sec:detfuncspeed}.)

An infinite execution $\calE = (\vc_0,\vc_1,\vc_2,\ldots)$ is \emph{fair} if, for all partial configurations $\vp$, if $\vp$ is infinitely often reachable then it is infinitely often reached.\footnote{i.e.
$(\forall \Delta \subseteq \Lambda) (\forall \vp\in\N^\Delta) [((\exists^\infty i\in\N)\ \vc_i \to^* \vp) \implies ((\exists^\infty j\in\N)\ \vp = \vc_j \rest \Delta)].$}
In other words, no reachable partial configuration is ``starved''.%
\footnote{This definition of fairness is stricter than that used in~\cite{AngluinAE06},
which used only full configurations rather than partial configurations.
We choose this definition to prevent intuitively unfair executions from vacuously satisfying the definition of ``fair'' simply because of some species whose count is monotonically increasing with time (preventing any configuration from being infinitely often reachable).
Such a definition is unnecessary in~\cite{AngluinAE06} because population protocols by definition have a finite state space, since they enforce that every reaction has precisely two reactants and two products.}
This definition, applied to finite executions, deems all of them fair vacuously.
We wish to distinguish between finite executions that can be extended by applying another reaction and those that cannot.
Say that a configuration is \emph{terminal} if no reaction is applicable to it.
We say that a finite execution is \emph{fair} if and only if it 
ends in a terminal configuration.
For any species $A \in \Lambda$, we write $\#_\infty A$ to denote the eventual convergent count of $A$ if $\# A$ is guaranteed to stabilize on any fair execution sequence; otherwise, $\#_\infty A$ is undefined.


The next lemma characterizes stable computation of functions by CRCs in terms of fair execution sequences, showing that the counts of output species will converge to the correct output values on any fair execution sequence.
An analogous lemma holds for CRDs.

\begin{lem}
A CRC stably computes a function $f:\N^k\to\N^l$ if and only if for every valid initial configuration $\vi \in \N^{\Lambda}$, every fair execution $\calE = (\vi,\vc_1,\vc_2,\ldots)$ contains an output count stable configuration $\vc$ such that $f(\vi(X_1), \vi(X_2), \ldots, \vi(X_k)) = (\vc(Y_1), \vc(Y_2), \ldots, \vc(Y_l))$.
\end{lem}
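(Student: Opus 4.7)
I plan to prove the two directions separately. The $(\Leftarrow)$ direction is the easier one: given any $\vc$ reachable from $\vi$, I extend the witnessing finite execution $\vi \to^* \vc$ to a fair execution (terminal or infinite) via a standard round-robin construction that periodically tries to reach every currently reachable partial configuration. By hypothesis, this fair extension contains an output count stable $\vc^*$ with correct output. If $\vc^*$ occurs at or after $\vc$, then $\vc \to^* \vc^*$ directly; if $\vc^*$ precedes $\vc$, then $\vc$ is reachable from $\vc^*$, and since both output count stability and correctness of output counts are inherited under reachability, $\vc$ itself is good and I can take $\vc^* := \vc$.

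For the $(\Rightarrow)$ direction I split on whether the fair execution is finite or infinite. A finite fair execution ends in a terminal configuration $\vc_m$, which is vacuously output count stable (no reaction is applicable), so stable computation forces its output to equal $f(\vi \rest \Sigma)$. For an infinite fair execution $(\vi, \vc_1, \vc_2, \ldots)$, let $\vp \in \N^\Gamma$ be the partial configuration with $\vp(Y_j) = f_j(\vi \rest \Sigma)$. By stable computation $\vp$ is reachable from every $\vc_i$, so fairness forces $\vp$ to be reached infinitely often. Assume for contradiction that no $\vc_i$ lies in the ``good'' set $S$ of output count stable configurations with correct output. Each $\vc_i$ with $\vc_i \rest \Gamma = \vp$ then fails to be output count stable, so some partial configuration $\vp' \neq \vp$ is reachable from it. Applying Dickson's lemma to the subsequence of such configurations yields a nondecreasing chain $\vc_{j_1} \leq \vc_{j_2} \leq \ldots$; combining this with the additivity of reachability (if $\vc_1 \to^* \vc_2$ then $\vc_1 + \vw \to^* \vc_2 + \vw$, and $\vc_{j_k} - \vc_{j_1}$ has zero $\Gamma$-component) lets me fix a single $\vp' \neq \vp$ reachable from every $\vc_j$. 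Fairness then forces $\vp'$ to be reached infinitely often as well, and iterating this Dickson-plus-additivity argument on the $\vp'$-visits produces the desired contradiction with stable computation.

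The main obstacle is the infinite case of the forward direction: formally ruling out ``oscillating'' fair executions in which the output counts keep leaving and returning to $\vp$ without ever entering $S$. The delicate point is that fairness delivers reachability only of partial configurations, so the Dickson-plus-additivity machinery is needed to propagate a single partial configuration's reachability uniformly across the entire execution. Closure of $S$ under reachability is also essential, since it ensures that once a fair execution enters $S$ it cannot leave, so a single visit suffices to witness the required output count stable configuration.
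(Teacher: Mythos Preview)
Your $(\Leftarrow)$ direction and the finite case of $(\Rightarrow)$ are fine and match the paper's argument. The gap is in the infinite case of $(\Rightarrow)$, specifically the final ``iteration'' step.

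You correctly show that, assuming no $\vc_i$ lies in $S$, some fixed wrong-output partial configuration $\vp' \neq \vp$ over $\Gamma$ is reached infinitely often. But this is not a contradiction with stable computation: stable computation only asserts that from every reachable configuration some correct output count stable configuration is reachable, and this is entirely compatible with the execution oscillating forever between $\vp$ and $\vp'$ on $\Gamma$. Repeating the Dickson-plus-additivity step on the $\vp'$-visits either recovers that $\vp$ is reachable from all of them (which you already knew from stable computation) or produces yet another $\vp'' \neq \vp$ reached infinitely often; neither outcome contradicts anything, and the ``iteration'' does not terminate in a contradiction.

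What is missing is a way to certify output count \emph{stability} by a partial configuration, so that fairness can force an actual visit to $S$. The paper supplies this via the $k$-truncation result (Lemma~11 of~\cite{AngluinAE06}): there is a constant $k$ such that a configuration is output count stable iff its $k$-truncation is. One takes the sequence $\vc'_1,\vc'_2,\ldots$ of correct output count stable configurations with $\vc_i \to^* \vc'_i$, extracts an infinite subsequence sharing a common $k$-truncation, and lets $\vp$ be the partial configuration recording the correct output together with every coordinate that is below $k$ in that shared truncation. Any full configuration extending this $\vp$ has the same $k$-truncation and is therefore output count stable with the correct output; since $\vp$ is reachable infinitely often, fairness delivers a visit to $S$. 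Your Dickson/additivity machinery is in the right spirit (indeed it underlies the $k$-truncation lemma itself), but applied only on the output coordinates $\Gamma$ it cannot distinguish ``currently has the right output'' from ``output count stable,'' which is exactly the distinction you need.
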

\begin{proof}
The ``if'' direction follows because every finite execution sequence can be extended to be fair,
and thus an output count stable configuration with the correct output is always reachable.
The ``only if'' direction is shown as follows.
We know that from any reachable configuration $\vc$,
some correct output stable configuration $\vc'$ is reachable (but possibly different $\vc'$ for different $\vc$).
We'll argue that in any infinite fair execution sequence there is some partial configuration that is reachable infinitely often, and that any state with this partial configuration is the correct stable output state.
Consider an infinite fair execution sequence $\vc_1, \vc_2, \dots$, and the corresponding reachable correct output stable configurations $\vc'_1, \vc'_2, \dots$.
As in Lemma~11 of~\cite{AngluinAE06},
there is some integer $k \geq 1$ such that a configuration is output count stable
if and only if it is output count stable when each coordinate that is larger than $k$ is set to exactly $k$ ($k$-truncation).
The infinite sequence $\vc'_1, \vc'_2, \dots$ must have an infinite subsequence sharing the same k-truncation.
Let $\vp$ be the partial configuration consisting of the correct output and all the coordinates less than $k$ in the shared truncation.
This partial configuration is reachable infinitely often, and no matter what the counts of the other species outside of $\vp$ are, the resulting configuration is output count stable.
\qedl\end{proof}


\subsection{Kinetic model}

The following model of stochastic chemical kinetics is widely used in quantitative biology and other fields dealing with chemical reactions between species present in small counts~\cite{Gillespie77}.
It ascribes probabilities to execution sequences, and also defines the time of reactions, allowing us to study the computational complexity of the CRN computation in Section~\ref{sec:detfuncspeed}.

In this paper, the rate constants of all reactions are $1$, and we define the kinetic model with this assumption.
A reaction is \emph{unimolecular} if it has one reactant and \emph{bimolecular} if it has two reactants.
We use no higher-order reactions in this paper when using the kinetic model.

The kinetics of a CRN is described by a continuous-time Markov process as follows.
Given a fixed volume $v$ and current configuration $\vc$, the \emph{propensity} of a unimolecular reaction $\alpha : X \to \ldots$ in configuration $\vc$ is $\rho(\vc, \alpha) = \#_\vc X$.
The propensity of a bimolecular reaction $\alpha : X + Y \to \ldots$, where $X \neq Y$, is $\rho(\vc, \alpha) = \frac{\#_\vc X \#_\vc Y}{v}$.
The propensity of a bimolecular reaction $\alpha : X + X \to \ldots$ is $\rho(\vc, \alpha) = \frac{1}{2} \frac{\#_\vc X (\#_\vc X - 1)}{v}$.
The propensity function determines the evolution of the system as follows.
The time until the next reaction occurs is an exponential random variable with rate $\rho(\vc) = \sum_{\alpha \in R} \rho(\vc,\alpha)$ (note that $\rho(\vc)=0$ if no reactions are applicable to $\vc$).
The probability that next reaction will be a particular $\alpha_{\text{next}}$ is $\frac{\rho(\vc,\alpha_{\text{next}})}{\rho(\vc)}$.

The kinetic model is based on the physical assumption of well-mixedness valid in a dilute solution.
Thus, we assume the \emph{finite density constraint}, which stipulates that a volume required to execute a CRN must be proportional to the maximum molecular count obtained during execution~\cite{SolCooWinBru08}.
In other words, the total concentration (molecular count per volume) is bounded.
This realistically constrains the speed of the computation achievable by CRNs.
Note, however, that it is problematic to define the kinetic model for CRNs in which the reachable configuration space is unbounded for some start configurations, because this means that arbitrarily large molecular counts are reachable.\footnote{One possibility is to have a ``dynamically" growing volume as in~\cite{SolCooWinBru08}.}
We apply the kinetic model only to CRNs with  configuration spaces that are bounded for each start configuration.

\section{Exactly the semilinear functions can be deterministically computed}
\label{sec:detfunc}
In this section we use Theorem~\ref{thm-semilinear} to show that only ``simple'' functions can be stably computed by CRCs.
This is done by showing how to reduce the computation of a function by a CRC to the decidability of its graph by a CRD, and vice versa.
In this section we do not concern ourselves with kinetics.
Thus the volume is left unspecified,
and we consider the combinatorial-only condition of fairness on execution sequences\repds{.}{ for our positive result (Lemma~\ref{lem-pres-comp}) and direct reachability arguments for the negative result (Lemma~\ref{lem-comp-pres}).}


The next lemma shows that every function computable by a chemical reaction network is semilinear
\repds{}{by reducing stably deciding a set that is the graph of a function to stably computing that function.
It turns out that the reduction technique of introducing ``production" and ``consumption" indicator species will be a general technique, used repeatedly in this paper.
}

\begin{lem}\label{lem-comp-pres}
  Every function stably computable by a CRC is semilinear.
\end{lem}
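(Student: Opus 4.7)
The plan is to reduce stable computation of $f$ to stable decidability of its graph and then invoke the forward direction of Theorem~\ref{thm-semilinear}. Given a CRC $\calC$ stably computing $f:\N^k\to\N^l$, I will construct a CRD $\calD$ that stably decides $\{(\vx,\vy)\in\N^{k+l}\mid f(\vx)=\vy\}$; semilinearity of the graph follows immediately, and hence $f$ is semilinear by definition.

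The CRD $\calD$ has input species $X_1,\ldots,X_k$ together with fresh input species $Y_1,\ldots,Y_l$ (distinct from the output species of $\calC$, which I rename to $Y_1',\ldots,Y_l'$). The naive idea---simulate $\calC$ and annihilate $Y_i+Y_i'$ pairs---fails because $\calC$ may produce and consume $Y_i'$ non-monotonically, so transient $Y_i'$ molecules could be erroneously paired with input $Y_i$'s. To circumvent this I apply the production/consumption indicator technique: augment each reaction of $\calC$ so that every net production of one $Y_i'$ also produces one copy of a fresh species $P_i$, and every net consumption of one $Y_i'$ also produces one copy of $C_i$. The resulting CRC $\calC'$ still stably computes $f$ (the new side-products are inert and do not alter the evolution of any other species) and maintains the invariant $\#Y_i'=\#P_i-\#C_i$, with $\#P_i$ and $\#C_i$ monotonically non-decreasing. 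Inside $\calD$ I further add the reactions $Y_i\to D_i$ and augment every reaction producing $C_i$ with an equal side-production of a fresh species $D_i$. Once all input $Y_i$'s have been converted and $\calC'$ has reached output-stability, one has $\#D_i=\#_0 Y_i+\#C_i$, so the question ``$f(\vx)=\vy$?''\ reduces to the question ``$\#P_i=\#D_i$ for every $i$?''\ on the monotonically produced counts of $P_i$ and $D_i$.

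The predicate ``$\#P_i=\#D_i$ for all $i$'' defines the linear set $\{(\vp,\vp)\mid\vp\in\N^l\}\subseteq\N^{2l}$, which is semilinear, so the backward direction of Theorem~\ref{thm-semilinear} supplies a CRD $\calD_{\mathrm{eq}}$ that stably decides it with input species $P_i,D_i$. I embed the reactions of $\calD_{\mathrm{eq}}$ into $\calD$ (on otherwise fresh internal species) and use its voters $L^0,L^1$ as the voters of $\calD$. Because $\calC'$ never reads $P_i$ or $D_i$, the downstream $\calD_{\mathrm{eq}}$ is free to consume them without perturbing the upstream dynamics; this is the ``monotonic output'' composition regime discussed in the introduction.

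The technical crux is verifying that the composition correctly handles input that arrives over time. Concretely, I must show that from every reachable configuration of $\calD$, a correct output-stable configuration is reachable. From any reachable state, applying only $\calC'$-reactions and $Y_i\to D_i$ reactions reaches a state in which $\calC'$ is output-stable; at that point, the cumulative totals $p_i$ of $P_i$ ever produced and $d_i$ of $D_i$ ever produced are fixed, and $p_i=d_i$ iff $f(\vx)_i=\vy_i$. By the addition-respecting property of CRN reachability, the current $\calD_{\mathrm{eq}}$-substate is also reachable in the standalone $\calD_{\mathrm{eq}}$ started with input $(p_i,d_i)_i$, where not-yet-consumed molecules of $P_i$ and $D_i$ play the role of unused input. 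Stable decidability of $\calD_{\mathrm{eq}}$ then yields a reachable, correct output-stable configuration, which lifts to $\calD$ since no further $\calC'$- or $Y_i\to D_i$-reaction is applicable. Thus $\calD$ stably decides the graph of $f$, and Theorem~\ref{thm-semilinear} finishes the argument.
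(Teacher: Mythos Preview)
Your proposal is correct and shares the paper's core strategy: reduce to stable decidability of the graph via production/consumption indicator species, then invoke the forward direction of Theorem~\ref{thm-semilinear}. The difference lies in how the equality test ``does the CRC's output match the purported $\vy$?'' is realized. The paper merges the input species for the purported output with the consumption indicator (both are the same species $Y_i^C$) and then gives four explicit reactions that stably decide whether $\#Y_i^P$ and $\#Y_i^C$ eventually agree for all $i$; correctness is verified directly for those reactions. You instead keep the bookkeeping species separate, observe that equality of the monotone totals $(\#P_i,\#D_i)_i$ is a semilinear predicate, and invoke the \emph{backward} direction of Theorem~\ref{thm-semilinear} to obtain a black-box equality CRD, which you then compose downstream. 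This is legitimate---both directions of Theorem~\ref{thm-semilinear} are established in~\cite{AngluinAE06}---but it shifts the work into justifying the composition, which you do via the addition-respecting property of reachability and the fact that the upstream CRC never reads $P_i,D_i$. The paper's route is more self-contained (only the forward direction of Theorem~\ref{thm-semilinear} is needed, and no general composition lemma), while yours is more modular and makes explicit why monotone-output composition is sound, at the cost of appealing to a result whose proof is of comparable difficulty to the four explicit reactions.
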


\begin{proof}
\repds{}{
Suppose there is a CRC $\calC$ stably computing $f$. 
We will construct a CRD $\calD$ that stably decides the graph of $f$. 
By Theorem~\ref{thm-semilinear}, this implies that the graph of $f$ is semilinear.
Intuitively, the difficulty lies in checking whether the amount of the outputs $Y_i$ produced by $\calC$ matches the  value given to the decider $\calD$ as input.
What makes this non-trivial is that $\calD$ does not know whether $\calC$ has finished computing, and thus must compare $Y_i$ while $Y_i$ is potentially being changed by $\calC$.
In particular, $\calD$ cannot consume $Y_i$ or that could interfere with the operation of $\calC$.
}

  Let $\calC=(\Lambda,R,\Sigma,\Gamma,\sigma)$ be the CRC that stably computes $f:\N^k\to\N^l$, with input species $\Sigma = \{X_1,\ldots,X_k\}$ and output species $\Gamma = \{Y_1,\ldots,Y_l\}$.
  Modify $\calC$ to obtain the following CRD $\calD = (\Lambda',R',\Sigma',\Upsilon',\phi',\sigma')$.
\repds{
  Let $\calY^C = \{ Y_1^C, \ldots, Y_l^C \}$, where each $Y_i^C \not\in \Lambda$ are new species.
  Let $\calY^P = \{ Y_1^P, \ldots, Y_l^P \}$, where each $Y_i^P \not\in \Lambda$ are new species.
}{
  Let $\calY^C = \{ Y_1^C, \ldots, Y_l^C \}$ and $\calY^P = \{ Y_1^P, \ldots, Y_l^P \}$, where each $Y_i^C, Y_i^P \not\in \Lambda$ are new species.%
}
  Intuitively, $\# Y_i^P$ represents the number of $Y_i$'s produced by $\calC$ and $\# Y_i^C$ the number of $Y_i$'s consumed by $\calC$.
  The goal is for $\calD$ to stably decide the predicate $f(\#_0 X_1,\ldots,\#_0 X_k) = (\#_0 Y_1^C,\ldots,\#_0 Y_l^C)$. 
  In other words, the initial configuration of $\calD$ will be the same as that of $\calC$ except for some copies of $Y_i^C$, equal to the purported output of $f$ to be tested by $\calD$.
\repds{Since every predicate stably decidable by a CRD is semilinear (Theorem~\ref{thm-semilinear}), this will prove the lemma.}{}

  Let $\Lambda' = \Lambda \cup \calY^C \cup \calY^P \cup \{L^0,L^1\}$.
  Let $\Sigma' = \Sigma \cup \calY^C$.
  Let $\Upsilon' = \{L^0,L^1\}$, with $\phi(L^0) = 0$ and $\phi(L^1) = 1$.
  Let \repds{}{$\sigma'(L^1) = 1$ and} $\sigma'(S) = 0$ for all \repdd{other}{} $S \in \Lambda' \setminus $ \repdd{$\Sigma'$}{$(\Sigma' \cup \{L^1\})$}.
  Modify $R$ \repds{by adding reactions}{} to obtain $R'$ as follows.
  For each reaction $\alpha$ that consumes a net number $n$ of $Y_i$ molecules, append $n$ products $Y_i^C$ to $\alpha$.
  For each reaction $\alpha$ that produces a net number $n$ of $Y_i$ molecules, append $n$ products $Y_i^P$ to $\alpha$.
  For example, the reaction $A + 2B + Y_1 + 3Y_3 \to Z + 3Y_1 + 2 Y_3$ becomes $A + 2B + Y_1 + 3Y_3 \to Z + 3Y_1 + 2 Y_3 + 2Y_1^P + Y_3^C$.
\repds{Since $\calC$ is count-stable, eventually no reactions producing or consuming net copies of $Y_i$ are possible, whence $\calD$ as defined so far is count output stable with respect to $Y_i^P$ and $Y_i^C$ as well.}{}

  Then add the following additional reactions to $R'$, for each $i \in \{1,\ldots,l\}$,
  \begin{eqnarray}
    Y_i^P + Y_i^C &\to& L^1 \label{rxn1-eq}
    \\
    Y_i^P + L^1 &\to& Y_i^P + L^0 \label{rxn2-eq}
    \\
    Y_i^C + L^1 &\to& Y_i^C + L^0 \label{rxn3-eq}
    \\
    L^0 + L^1 &\to& L^1 \label{rxn4-eq}
  \end{eqnarray}

\repds{}{Observe that if $f(\#_0 X_1,\ldots,\#_0 X_k) = (\#_0 Y_1^C,\ldots,\#_0 Y_l^C)$,
then from any reachable configuration we can reach a configuration without any $Y_i^P$ or $Y_i^C$ for all $i$, and such that no more of either kind can be produced.
(The CRC stabilizes and all of $Y_i^P$ and $Y_i^C$ is consumed by reaction~\ref{rxn1-eq}.)
In this configuration we must have $\# L^1 > 0$ because the last instance of reaction \ref{rxn1-eq} produced it (or if no output was ever produced, $L^1$ comes from the initial context $\sigma'$),
and $L^1$ can no longer be consumed in reactions~\ref{rxn2-eq}--\ref{rxn3-eq}.
Thus, since all of $L^0$ can be consumed in reaction~\ref{rxn4-eq},
a configuration with $\# L^1 >0$ and $\# L^0 = 0$ is always reachable,
and this configuration is output stable.
}

\repds{}{Now suppose $f(\#_0 X_1,\ldots,\#_0 X_k) \neq (\#_0 Y_1^C,\ldots,\#_0 Y_l^C)$ for some \repdd{}{output} coordinate $i^*$ \repdd{}{$\in \{1,\ldots,l\}$}.
This means that from any reachable configuration we can reach a configuration with either $\# Y_{i^*}^P > 0$ or $\# Y_{i^*}^C > 0$ but not both,
and such that for all $i$, no more of $Y_i^P$ and $Y_i^C$ can be produced.
(\repdd{The}{This happens when the} CRC stabilizes and reaction~\ref{rxn1-eq} consumes the smaller of $Y_{i^*}^P$ or $Y_{i^*}^C$.)
From this configuration, we can reach a configuration with $\# L^0 > 0$ and $\# L^1 = 0$ through reactions~\ref{rxn2-eq}--\ref{rxn3-eq}.
This is an output stable configuration since reactions~\ref{rxn2-eq}--\ref{rxn4-eq} require $L^1$.
}
\qedl\end{proof}

\repds{  In the following, we use $\#_\infty^\uparrow Y_i^P$ to denote the total number of $Y_i^P$ ever produced and $\#_\infty^\uparrow Y_i^C$ to denote $\#_0 Y_i^C$ plus the total number of $Y_i^C$'s ever produced.
  Note that, if and only if $f(\#_0 X_1,\ldots,\#_0 X_k) = (\#_0 Y_1^C,\ldots,\#_0 Y_l^C)$, then eventually, for each $i$, $\# Y_i^P$ and $\# Y_i^C$ stabilize to equal values in the absence of reaction \eqref{rxn1-eq};
  in other words, if and only if $\#_\infty^\uparrow Y_i^P = \#_\infty^\uparrow Y_i^C$.
}{}

\repds{
  Since $Y_i^P$ and $Y_i^C$ are possibly produced but not consumed by reactions other than \eqref{rxn1-eq}, we may think of reaction~\eqref{rxn1-eq} as if it does not occur until $\# Y_i^P$ and $\# Y_i^C$ have stabilized, even though reaction~\eqref{rxn1-eq} may consume some copies of $Y_i^P$ and $Y_i^C$ before all eventual copies have been produced.
}{}

\repds{
  Reactions~\eqref{rxn1-eq}-\eqref{rxn4-eq} 
  ensure that if $\#_\infty^\uparrow Y_i^P = \#_\infty^\uparrow Y_i^C$ for all $i \in \{1,\ldots,l\}$, then $\#_\infty L^1 > 0$ and $\#_\infty L^0 = 0$, and if $\#_\infty^\uparrow Y_i^P \neq \#_\infty^\uparrow Y_i^C$ for some $i \in \{1,\ldots,l\}$, then $\#_\infty L^1 = 0$ and $\#_\infty L^0 > 0$.
  To show that this holds, we have two cases for each $i \in \{1,\ldots,l\}$.
  In the following, we write $f(\# X_1,\ldots,\# X_k)_i$ to denote the value $\# Y_i$ if $f(\# X_1,\ldots,\# X_k)=(\# Y_1,\ldots,\# Y_l)$.
}{}

The next lemma shows the converse of Lemma~\ref{lem-comp-pres}.
Intuitively, it uses a random search of the output space to look for the correct answer to the function and uses a predicate decider to check whether the correct solution has been found.

\begin{lem} \label{lem-pres-comp}
  Every semilinear function is stably computable by a CRC.
\end{lem}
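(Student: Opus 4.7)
The plan is to have the CRC ``search'' over candidate output values, using a predicate-deciding CRD to certify correctness and drive corrections in the right direction. First I would use semilinearity of $\text{graph}(f)$ to obtain the needed decider: since semilinear sets are closed under Presburger operations, for each $j\in\{1,\ldots,l\}$ the sets $\{(\vx,\vy) \mid y_j < f(\vx)_j\}$ and $\{(\vx,\vy) \mid y_j > f(\vx)_j\}$ are semilinear (the unique witness $\vec{z}=f(\vx)$ makes the existential quantifier harmless). Theorem~\ref{thm-semilinear} then yields a single CRD $\calD$ stably deciding, for each $j$, whether $y_j$ is less than, greater than, or equal to $f(\vx)_j$, with corresponding voters $V_j^{<},V_j^{>},V_j^{=}$. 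Since Theorem~\ref{thm-semilinear} is proved via population protocols, I may assume $\calD$ does not destroy its input species.

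Next I would build the CRC $\calC$ by supplementing $\calD$. Introduce a ``source'' species $S$ and ``processed input'' species $X_i^\calD$ that $\calD$ uses in place of $X_i$, together with conversion reactions $X_i \to X_i^\calD + C\cdot S$ for a constant $C$. Semilinearity of $f$ yields an affine bound $\|f(\vx)\|_1 \le C\|\vx\|_1 + C'$; a suitable $C$, plus $C'$ copies of $S$ placed in the initial context $\sigma$, guarantees enough source material to produce up to $f(\vx)_j$ copies of each $Y_j$. Finally, add feedback reactions $V_j^{<} + S \to V_j^{<} + Y_j$ and $V_j^{>} + Y_j \to V_j^{>} + S$ that adjust $Y_j$ in the direction signaled by $\calD$.

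For correctness I would show that from every reachable configuration a correct output-count-stable configuration is reachable. By fairness, all $X_i$ are eventually consumed, fixing $\calD$'s inputs at $X_i^\calD = \vx_i$. From any reachable configuration we can then complete pending conversions, let $\calD$ stabilize on the current $Y_j$'s, and apply one feedback reaction; the integer potential $\sum_j |Y_j - f(\vx)_j|$ strictly decreases with each such correction, so after finitely many iterations $Y_j = f(\vx)_j$ for all $j$. At this point $\calD$ stabilizes with only $V_j^{=}$ voters present, both types of feedback reaction become inapplicable, and the $Y_j$ counts are permanently frozen.

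The main obstacle will be making rigorous the ``$\calD$ re-decides after each perturbation'' step, since the feedback reactions continually modify $\calD$'s own inputs. The key point to emphasize is that at every moment the current counts of $X_i^\calD$ and $Y_j$ together with the initial context of $\calD$ constitute a valid initial configuration of $\calD$, so $\calD$'s stable-decidability guarantee---that a correct output configuration is reachable from any valid initial configuration---lets us always extend the execution to one in which $\calD$ has stabilized on the present input. Combined with the monotone integer potential $\sum_j |Y_j - f(\vx)_j|$ bounding the number of corrections required, this yields both reachability of the correct output-stable configuration from every reachable configuration and non-stability of any incorrect one, which is exactly the definition of stable computation.
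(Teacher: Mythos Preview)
There is a real gap in how your feedback reactions interact with the decider $\calD$. You write that ``at every moment the current counts of $X_i^{\calD}$ and $Y_j$ together with the initial context of $\calD$ constitute a valid initial configuration of $\calD$''; that sentence is true but does not buy what you need, because the system is not \emph{in} that configuration---$\calD$'s non-input species have already been altered. What is actually required is that the \emph{current} configuration be reachable in $\calD$ from the valid initial configuration carrying the \emph{current} input counts; only then does stable decidability guarantee a correct stable answer is still reachable. This holds when inputs are only added (by additivity of reachability, $\vc_1\to^*\vc_2$ implies $\vc_1+\vec{e}\to^*\vc_2+\vec{e}$), but it fails when an input is \emph{removed}, as in your reaction $V_j^{>}+Y_j\to V_j^{>}+S$. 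Your appeal to population protocols does not rescue this: an agent that began as $Y_j$ changes state, so as a CRN species $Y_j$ is consumed even though the agent persists, and your decrement reaction may find nothing to consume; if instead you force $Y_j$ to be purely catalytic in bimolecular reactions, $\calD$ can only sense whether $\#Y_j\geq 1$ and cannot decide your comparison predicates at all.

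The paper's proof confronts exactly this obstacle and resolves it with the $Y_j^P/Y_j^C$ device: rather than letting $\calD$ read $Y_j$, it decides the semilinear set $\widehat{F}=\{(\vx,\vy_P,\vy_C):f(\vx)=\vy_P-\vy_C\}$, and the feedback reactions $L^0\to L^0+Y_j^P+Y_j$ and $L^0+Y_j\to L^0+Y_j^C$ only ever \emph{add} the species $Y_j^P,Y_j^C$ that $\calD$ sees. Additivity then keeps every reached configuration inside the reachability cone of the valid initial configuration of $\calD$ with the current effective input, so $\calD$'s guarantee applies throughout the search. Your directed search can be repaired by the same trick---feed $\calD$ monotone $Y_j^P,Y_j^C$ and keep $Y_j$ as a separate output tracking their difference---after which it becomes a legitimate, and arguably cleaner, variant of the paper's undirected fairness-based search.
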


\newcommand{\oF}{\widehat{F}}

\begin{proof}
  Let $f:\N^k \to \N^l$ be a semilinear function, and let
  $$
    F = \setl{ (\vx,\vy) \in \N^k \times \N^l }{ f(\vx) = \vy }
  $$
  denote the graph of $f$.
  We then consider the set
  $$
    \oF = \setl{ (\vx,\vy_P,\vy_C) \in \N^k \times \N^l \times \N^l }{ f(\vx) = \vy_P - \vy_C }.
  $$
  Intuitively, $\oF$ defines the same function as $F$, but with each output variable expressed as the difference between two other variables.
  Note that $\oF$ is not the graph of a function since for each $\vy \in \N^l$ there are an infinite number of pairs $(\vy_P,\vy_C)$ such that $\vy_P - \vy_C = \vy$.
  However, we only care that $\oF$ is a semilinear set so long as $F$ is a semilinear set, by Lemma~\ref{lem-semilinear-difference}, proven below.

  Then by Theorem~\ref{thm-semilinear}, $\oF$ is stably decidable by a CRD $\calD = (\Lambda,R,\Sigma,\Upsilon,\phi,\sigma)$, where $$\Sigma = \{X_1,\ldots,X_k,Y^P_1,\ldots,Y^P_l,Y^C_1,\ldots,Y^C_l\},$$ and we assume that $\Upsilon$ contains only species $L^1$ and $L^0$ such that for any output-stable configuration of $\calD$, exactly one of $\# L^1$ or $\# L^0$ is positive to indicate a yes or no answer, respectively.

  Define the CRC $\calC = (\Lambda',R',\Sigma',\Gamma',\sigma')$ as follows.
  Let $\Sigma' = \{X_1,\ldots,X_k\}$.
  Let $\Gamma' = \{Y_1,\ldots,Y_l\}$.
  Let $\Lambda' = \Lambda \cup \Gamma'$.
  Let $\sigma'(S) = \sigma(S)$ for all $S \in \Lambda \setminus \Sigma$, and let $\sigma'(S) = 0$ for all $S \in \Lambda' \setminus (\Lambda \setminus \Sigma)$.
  Intuitively, we will have $L^0$ change the value of $\vy$ (by producing either $Y^P_j$ or $Y^C_j$ molecules), since $L^0$'s presence indicates that $\calD$ has not yet decided that the predicate is satisfied.
  It essentially searches for new values of $\vy$ that do satisfy the predicate.
  This indirect way of representing the value $\vy$ is useful because $\vy_P$ and $\vy_C$ can both be increased monotonically to change $\vy$ in either direction.
  If $\calD$ had $Y_j$ as a species directly, and if we wanted to test a lower value of $\vy_j$, then this would require consuming a copy of $Y_j$, but this may not be possible if $\calD$ has already consumed all of them.

  Let $R'$ be $R$ plus the following reactions for each $j \in \{1, \ldots,l\}$:
  \begin{eqnarray}
    L^0 &\to& L^0 + Y_j^P + Y_j \label{rxn5-eq}
    \\
    L^0 + Y_j &\to& L^0 + Y_j^C \label{rxn6-eq}
  \end{eqnarray}

  It is clear that reactions \eqref{rxn5-eq} and \eqref{rxn6-eq} enforce that at any time, $\# Y_j$ is equal to the total number of $Y^P_j$'s produced by reaction \eqref{rxn5-eq} minus the total number of $Y^C_j$'s produced by reaction \eqref{rxn6-eq} (although some of each of $Y^P_j$ or $Y^C_j$ may have been produced or consumed by other reactions in $R$).

  Suppose that $f(\vx) \neq (\# Y_1,\ldots,\# Y_l)$.
  Then if there are no $L^0$ molecules present, the counts of $Y^P_j$ and $Y^C_j$ are not changed by reactions \eqref{rxn5-eq} and \eqref{rxn6-eq}.
  Therefore only reactions in $R$ proceed, and by the correctness of $\calD$, eventually an $L^0$ molecule is produced (since eventually $\calD$ must reach an output-stable configuration answering ``no'', although $L^0$ may appear before $\calD$ reaches an output-stable configuration, if some $L^1$ are still present).
  Once $L^0$ is present, by the fairness condition (choosing $\Delta = \{Y_1,\ldots.Y_l\}$), eventually the value of $(\# Y_1,\ldots,\# Y_l)$ will change by reaction \eqref{rxn5-eq} or \eqref{rxn6-eq}.
  In fact, \emph{every} value of $(\# Y_1,\ldots,\# Y_l)$ is possible to explore by the fairness condition.

  Suppose then that $f(\vx) = (\# Y_1,\ldots,\# Y_l)$.
  Perhaps $L^0$ is present because the reactions in $R$ have not yet reached an output-stable ``yes'' configuration.
  Then perhaps the value of $(\# Y_1,\ldots,\# Y_l)$ will change so that $f(\vx) \neq (\# Y_1,\ldots,\# Y_l)$.
  But by the fairness condition, a correct value of $(\# Y_1,\ldots,\# Y_l)$ must be present infinitely many times, so again by the fairness condition, since from such a configuration it is possible to eliminate all $L^0$ molecules before producing $Y^P_j$ or $Y^C_j$ molecules, this must eventually happen.
  When all $L^0$ molecules are gone while $f(\vx) = (\# Y_1,\ldots,\# Y_l)$ and $\calD$ is in an output-stable configuration (thus no $L^0$ can ever again be produced), then it is no longer possible to change the value of $(\# Y_1,\ldots,\# Y_l)$, whence $\calC$ has reached a count-stable configuration with the correct answer.
  Therefore $\calC$ stably computes $f$.
\qedl\end{proof}

\begin{lem} \label{lem-semilinear-difference}
  Let $k,l\in\Z^+$, and suppose $F \subseteq \N^k \times \N^l$ is semilinear.
  Define
  $$
    \oF = \setl{ (\vx,\vy_P,\vy_C) \in \N^k \times \N^l \times \N^l }{ (\vx,\vy_P - \vy_C) \in F }.
  $$
  Then $\oF$ is semilinear.
\end{lem}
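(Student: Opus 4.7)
The plan is to reduce to the case of a single linear set and then exhibit an explicit linear representation for $\oF$ by augmenting the period vectors of $F$.

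First, I would observe that the class of semilinear sets is closed under finite unions, and that the map $F \mapsto \oF$ commutes with union: if $F = F_1 \cup \cdots \cup F_m$, then $\oF = \overline{F_1} \cup \cdots \cup \overline{F_m}$. So it suffices to prove the lemma assuming $F \subseteq \N^k \times \N^l$ is a single linear set, of the form
\[
F = \setl{\vb + n_1 \vu_1 + \cdots + n_p \vu_p}{n_1,\ldots,n_p \in \N}
\]
for some $\vb,\vu_1,\ldots,\vu_p \in \N^{k+l}$.

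Next, I would split each vector into its $\N^k$ and $\N^l$ components, writing $\vb = (\vb_X,\vb_Y)$ and $\vu_i = (\vu_{i,X},\vu_{i,Y})$. The defining condition $(\vx,\vy_P - \vy_C) \in F$ becomes: there exist $n_1,\ldots,n_p \in \N$ with $\vx = \vb_X + \sum_i n_i \vu_{i,X}$ and $\vy_P - \vy_C = \vb_Y + \sum_i n_i \vu_{i,Y}$. The key idea is that the last equation, viewed over $\N$, is equivalent to existence of $m_1,\ldots,m_l \in \N$ with $\vy_C = \sum_j m_j \vec{e}_j$ and $\vy_P = \vb_Y + \sum_i n_i \vu_{i,Y} + \sum_j m_j \vec{e}_j$, where $\vec{e}_j$ denotes the $j$-th standard basis vector of $\N^l$. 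In other words, the identity $\vy_P - \vy_C = \vy$ is parameterized by letting $\vy_C$ range freely over $\N^l$ and setting $\vy_P = \vy + \vy_C$.

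From this, I would simply write down the claimed linear representation of $\oF$ inside $\N^{k+2l}$: offset vector $(\vb_X,\vb_Y,\mathbf{0})$, together with period vectors $(\vu_{i,X},\vu_{i,Y},\mathbf{0})$ for $i=1,\ldots,p$ (inherited from $F$) and $(\mathbf{0},\vec{e}_j,\vec{e}_j)$ for $j=1,\ldots,l$ (incrementing $\vy_P$ and $\vy_C$ simultaneously in the $j$-th coordinate). A direct check shows both containments: any nonnegative integer combination of these generators yields a triple in $\oF$, and conversely any triple in $\oF$ determines coefficients $n_i$ and $m_j := (\vy_C)_j$ realizing it. This completes the proof.

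There is no real obstacle, the construction is essentially bookkeeping; the only thing to get right is recognizing that the symmetric pair of periods $(\mathbf{0},\vec{e}_j,\vec{e}_j)$ precisely captures the freedom to shift $\vy_P$ and $\vy_C$ without changing their difference, which is exactly what is needed. Alternatively, one could give a one-line proof by appealing to the equivalence between semilinear sets and Presburger-definable sets: $\oF$ is definable from $F$ by the Presburger formula $(\exists \vy \in \N^l)\, [(\vx,\vy)\in F \wedge \vy + \vy_C = \vy_P]$, hence is semilinear. I would include the explicit construction as the main proof, since it is self-contained and concretely exhibits the linear representation.
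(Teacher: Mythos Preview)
Your proposal is correct and follows essentially the same approach as the paper: reduce to a single linear set, extend the base and periods by zeros in the new $\vy_C$ coordinates, and adjoin the $l$ diagonal period vectors $(\mathbf{0},\vec{e}_j,\vec{e}_j)$ to capture the freedom in $(\vy_P,\vy_C)$ with fixed difference. The paper does not mention the Presburger-definability shortcut, but your explicit construction matches theirs.
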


\begin{proof}
  Let $F_1,\ldots,F_t$ be linear sets such that $F = \bigcup_{i=1}^t F_i$.
  For each $i\in\{1,\ldots,t\}$, define
  $$
    \oF_i = \setl{ (\vx,\vy_P,\vy_C) \in \N^k \times \N^l \times \N^l }{ (\vx,\vy_P - \vy_C) \in F_i }.
  $$
  It suffices to show that each $\oF_i$ is linear since $\oF = \bigcup_{i=1}^t \oF_i$.
  Let $i\in\{1,\ldots,t\}$ and let $\vb,\vu_1,\ldots,\vu_r\in \N^k \times \N^l$ be such that
  $$
    F_i = \setl{\vb + \sum_{j=1}^r n_j \vu_j}{ n_j \in \N}.
  $$
  Define the vectors
  $\vv_1,\ldots,\vv_r\in \N^k \times \N^l \times \N^l$
  as
  $\vv_j = (\vu_j,0^l)$.
  Here, $0^l$ denotes the vector in $\N^l$ consisting of all zeros.
  In other words, let $\vv_j$ be $\vu_j$ on its first $k+l$ coordinates and 0 on its last $l$ coordinates.
  Similarly define $\vb' = (\vb,0^l)$.

  Also, for each $j \in \{1,\ldots,l\}$ define
  $\vv_{r+j} = (0^{k},0^{j-1} 1 0^{l-j}, 0^{j-1} 1 0^{l-j}).$ (i.e., a single 1 in the position corresponding to the $j$th output coordinate, one for $\vy_P$ and one for $\vy_C$).
  Without the vectors $\vv_{r+j}$, the set of points defined by $\vb',\vv_1,\ldots,\vv_r$ would be simply $F_i$ with $l$ 0's appended to the end of each vector.
  By adding the vectors $\vv_{r+j}$, for each $(\vx,\vy)\in F_i$ and each $\vy_P,\vy_C \in \N^l$ such that $\vy = \vy_P - \vy_C$, we have that $(\vx,\vy_P,\vy_C) = \vb' + \sum_{j=1}^{r+1} n_j \vv_j$ for some $n_1,\ldots,n_{r+l} \in \N$; in particular, for $n_1,\ldots,n_r$ chosen such that $(\vx,\vy) = \vb + \sum_{j=1}^{r} n_j \vu_j$ and $n_{r+j} = \vy_C(j)$ for each $j\in\{1,\ldots,l\}$.

  Thus $\oF_i = \setl{\vb' + \sum_{j=1}^{r+l} n_j \vv_j}{ n_j \in \N},$ whence $\oF_i$ is linear.
\qedl\end{proof}

Lemmas~\ref{lem-comp-pres} and~\ref{lem-pres-comp} immediately imply the following theorem.

\begin{theorem} \label{thm-func}
  A function $f:\N^k\to\N^l$ is stably computable by a CRC if and only if it is semilinear.
\end{theorem}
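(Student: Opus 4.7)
The plan is to observe that Theorem~\ref{thm-func} is an immediate biconditional combination of the two preceding lemmas, so the proof reduces to citing them in the two directions and pointing to what each contributes. No additional argument is needed beyond invoking the already-proven results, so the ``proof'' is essentially a one-line combination; the substantive work has been done in establishing the lemmas.

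For the forward direction (stably computable $\Rightarrow$ semilinear), I would cite Lemma~\ref{lem-comp-pres}. Conceptually, given a CRC $\calC$ stably computing $f$, that lemma constructs a CRD $\calD$ that stably decides the graph $F = \setl{(\vx,\vy)}{f(\vx)=\vy}$ by augmenting $\calC$ with ``production'' and ``consumption'' bookkeeping species $Y_j^P, Y_j^C$ and voting species $L^0,L^1$; then Theorem~\ref{thm-semilinear} of Angluin--Aspnes--Eisenstat converts stable decidability of $F$ into semilinearity of $F$, which is the definition of semilinearity of $f$.

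For the backward direction (semilinear $\Rightarrow$ stably computable), I would cite Lemma~\ref{lem-pres-comp}. Conceptually, given a semilinear $f$, Lemma~\ref{lem-semilinear-difference} shows that the ``differenced'' graph $\oF = \setl{(\vx,\vy_P,\vy_C)}{f(\vx)=\vy_P-\vy_C}$ is again semilinear, so by Theorem~\ref{thm-semilinear} it is stably decided by some CRD $\calD$; then one appends reactions $L^0 \to L^0 + Y_j^P + Y_j$ and $L^0 + Y_j \to L^0 + Y_j^C$, which use the ``no'' voter $L^0$ to nondestructively search over candidate output vectors $\vy = \vy_P - \vy_C$ until $\calD$ certifies correctness.

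The only potential subtlety is a notational/definitional check that the two lemmas really do line up at the statement of the theorem: Lemma~\ref{lem-comp-pres} starts from the CRC side and concludes semilinearity of $f$ as defined in Section~\ref{sec-prelim} (graph is a semilinear set), while Lemma~\ref{lem-pres-comp} starts from that same definition and produces a CRC. Since both lemmas use the same notion of ``semilinear function'' and the same notion of ``stably computes,'' no reconciliation is required and the theorem follows without further work. There is no main obstacle in this step; the real obstacles were the technical reductions inside the two lemmas (especially the nondestructive search argument in Lemma~\ref{lem-pres-comp} that relies on fairness to guarantee both that every candidate output is explored and that the system eventually settles once a correct one is found).
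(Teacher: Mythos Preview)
Your proposal is correct and matches the paper's own treatment exactly: the paper states that Lemmas~\ref{lem-comp-pres} and~\ref{lem-pres-comp} immediately imply the theorem, with no additional argument given. Your conceptual recaps of what each lemma does are accurate (and more detailed than the paper bothers to be at this point), and your observation that no reconciliation of definitions is needed is also right.
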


One unsatisfactory aspect of Lemma~\ref{lem-pres-comp} is that we do not reduce the computation of $f$ directly to a CRD deciding the graph $F$ of $f$, but rather to $\calD$ deciding a related set $\oF$.
It is not clear how to directly reduce to a CRD deciding $F$ since it is not obvious how to modify such a CRD to monotonically produce extra species that could be processed by the CRC computing $f$.
Lemma~\ref{lem-comp-pres}, on the other hand, directly uses $\calC$ as a black-box.
Although we know that $\calC$, being a chemical reaction computer, is only capable of computing semilinear functions, if we imagine that some external powerful ``oracle'' controlled the reactions of $\calC$ to allow it to stably compute a non-semilinear function, then $\calD$ would decide that function's graph.
Thus Lemma~\ref{lem-comp-pres} is more like the black-box oracle Turing machine reductions employed in computability and complexity theory, which work no matter what mythical device is hypothesized to be responsible for answering the oracle queries.

\section{Semilinear functions can be quickly computed}
\label{sec:detfuncspeed}

\newcommand{\hvy}{\hat{\vy}}
\newcommand{\hY}{\widehat{Y}}
\newcommand{\oY}{\overline{Y}}

\newcommand{\Fix}{P_\textrm{fix}}
\newcommand{\Chck}{P_\textrm{check}}
\newcommand{\ERR}{\mathrm{ERR}}
\newcommand{\Exp}{\mathrm{E}}

\newcommand{\FW}{\mathrm{FW}}   
\newcommand{\TF}{\mathrm{TF}}   
\newcommand{\SD}{\mathrm{SD}} 

Lemma~\ref{lem-pres-comp} describes how a CRC can deterministically compute any semilinear function.
However, there are problems with this construction if we attempt to use it to evaluate the speed of semilinear function computation in the kinetic model.
First, the configuration space is unbounded for any input since the construction searches over outputs without setting bounds.
Thus, more care must be taken to ensure that any infinite execution sequence will be fair with probability 1 in the kinetic model.
What is more, since the maximum molecular count is unbounded, it is not clear how to set the volume for the time analysis.
Even if we attempt to properly define kinetics,  it seems like any reasonable time analysis of the random search process will result in expected time at least exponential in the size of the output.\footnote{The random walk is biased downward because of the increasing propensities of the reactions consuming $Y_i$'s.}

For our asymptotic time analysis, let the input size $n =  \|\vx\|$ be the number of input molecules.
In this section, the total molecular count attainable will always be $O(n)$;
thus, by finite density constraint, the volume $v = O(n)$.

We require the following theorem, due to Angluin, Aspnes, Diamadi, Fischer, and Ren\'{e}~\cite{angluin2006passivelymobile}, which states that any semilinear predicate can be decided by a CRD in expected time $O(n \log n)$.  (This was subsequently reduced to $O(n)$ by Angluin, Aspnes, and Eisenstat~\cite{angluin2006fast}, but $O(n \log n)$ suffices for our purpose.)

\begin{theorem}[\cite{angluin2006passivelymobile}]
  Let $\phi:\N^k\to\{0,1\}$ be a semilinear predicate.
  Then there is a stable CRD $\calD$ that decides $\phi$, and the expected time to reach an output-stable state on input is $O(n \log n)$.
\end{theorem}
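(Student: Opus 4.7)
The plan is to reduce the problem to building blocks via the logical characterization of semilinear sets, then implement each block with a ``leader-and-counter'' gadget whose running time I can analyze via standard meeting-time bounds at density $\Theta(1)$.

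First I would use the classical fact (e.g.\ Presburger quantifier elimination, as referenced in the paper after Theorem~\ref{thm-semilinear}) that every semilinear predicate $\phi:\N^k\to\{0,1\}$ is a Boolean combination of atomic predicates of two forms: threshold predicates $\sum_i a_i x_i \ge c$ and mod predicates $\sum_i a_i x_i \equiv r \pmod m$, where $a_i \in \Z$ and $c,r,m$ are constants. It therefore suffices to build a CRD with expected $O(n\log n)$ convergence for each atomic type and to close the class under $\land,\lor,\lnot$.

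Next, for each atomic predicate I would use a single leader molecule $\ell$, provided in the initial context, whose internal state encodes a bounded counter (either clamped to $[-c,c]$ for a threshold, or taken modulo $m$ for a mod predicate). For each input species $X_i$, include reactions $\ell_j + X_i \to \ell_{j+a_i} + X_i'$, where $X_i'$ is a ``used'' tag so each input molecule contributes to the counter exactly once. With volume $v = \Theta(n)$ (finite density), the rate at which $\ell$ meets some $X_i$ is $\Theta(\#X_i/v)$; a coupon-collector-style argument bounds the total expected time for the leader to visit and tag all $n$ input molecules by $O(n\log n)$ (the $\log n$ factor coming from the thinning of unmarked inputs near the end). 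The current sign/residue of $\ell$'s counter then encodes the atomic answer. For the Boolean composition I would run constantly many such leaders in parallel on disjoint copies of the tags and combine their outputs by standard finite-state reactions between leaders, which does not change the asymptotics.

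To make the answer visible on a stable voting species as required by the definition of stable decidability, I would use the epidemic broadcast technique: once the leader reaches a provisional answer it emits the answer through reactions $\ell^b + Z \to \ell^b + L^b$ and $L^{1-b} + L^b \to 2 L^b$ whenever the leader updates to vote $b$, so any change of the leader's opinion propagates to the whole population in $O(\log n)$ expected time by a standard birth-process analysis. Because every change of the leader's counter is triggered by consuming a fresh untagged input and there are only $n$ such events, the opinion changes at most $n$ times; adding $O(\log n)$ for each broadcast gives the claimed $O(n\log n)$ total.

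The main obstacle is \emph{stability}, not speed: the CRD must be guaranteed to reach an output-stable configuration on every fair execution, not merely converge with high probability. I would address this by making the leader-based counter deterministic (every atomic predicate is decided by the leader's internal state once all inputs are tagged, and the tagging is irreversible), so after finitely many reactions the leader's state is fixed and the epidemic drives all voters to the correct value irrevocably. The probabilistic part only enters in bounding the \emph{time} to tag all inputs and to flood the final vote, which is where the $O(n\log n)$ expectation comes from; the tricky part of the analysis is justifying the coupon-collector bound when the leader's walk is biased by the diminishing pool of untagged inputs, and showing that Boolean combination does not compound the $\log n$ factor.
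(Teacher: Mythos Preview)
The paper does not prove this theorem; it quotes it from Angluin, Aspnes, Diamadi, Fischer, and Ren\'e and uses it as a black box. Your overall architecture---decompose a semilinear predicate into threshold and mod atoms, evaluate each atom with a leader carrying a finite counter, and broadcast the verdict---is indeed the architecture of the original population-protocol proofs, so at that level there is nothing in this paper to compare against.

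There is, however, a genuine gap in your sketch. A leader counter \emph{clamped} to $[-c,c]$ does not stably evaluate a threshold $\sum_i a_i x_i \ge c$ when the $a_i$ have mixed signs, because the order in which the leader meets input molecules is uncontrolled and clamping destroys information. Concretely, for the predicate $x_1 - x_2 \ge 1$ with $x_1=3$, $x_2=2$ and clamp range $[-1,1]$, if the leader happens to tag all three $X_1$'s first and then both $X_2$'s, the counter evolves $0\to 1\to 1\to 1\to 0\to -1$ and the gadget outputs ``no'' even though the predicate holds. The standard repair (and what the cited construction actually does) is to first replicate and cancel: each $X_i$ emits $a_i$ copies of a positive token $P$ or $-a_i$ copies of a negative token $N$, the reaction $P+N\to\emptyset$ runs to exhaustion, and only then does a leader count surviving $P$'s up to $c$ with a one-sided clamped counter. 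The cancellation phase is itself $O(n\log n)$ in expectation and is deterministic in the sense you need for stability. Your modular counter for the mod atoms is fine as written.

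A smaller point: the epidemic rule $L^{1-b}+L^b\to 2L^b$ is ill-posed as stated, since a CRN cannot choose which direction of this reaction ``wins'' based on the leader's current vote. Either let only the leader recolor voters via $\ell^b + L^{1-b}\to \ell^b + L^b$ (this is $O(n\log n)$ after the leader's state stabilizes, which already suffices for the bound), or thread a bounded phase index through the vote species so that newer opinions overwrite older ones.
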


Throughout this section, we use the technique of ``running multiple CRNs in parallel'' on the same input.
To accomplish this it is necessary to split the inputs $X_1,\ldots,X_k$ into separate molecules using a reaction $X_i \to X_i^1 + X_i^2 + \ldots + X_i^p$, which will add only $O(\log n)$ to the time complexity, so that each of the $p$ separate parallel CRNs do not interfere with one another.
For brevity we omit stating this formally when the technique is used.

The next lemma shows that affine partial functions can be computed in expected time $O(n \log n)$ by a CRC.
For its use in proving Theorem~\ref{lem-compute-semilinear-n-log-n}, we require that the output molecules be produced monotonically.
This is impossible for general affine partial functions.
For example, consider the function $f(x_1,x_2) = x_1 - x_2$ where $\dom f = \setl{(x_1,x_2)}{x_1 \geq x_2}$.
By withholding a single copy of $X_2$ and letting the CRC stabilize to the output value $\# Y = x_1-x_2+1$, then allowing the extra copy of $X_2$ to interact, the only way to stabilize to the correct output value $x_1-x_2$ is to consume a copy of the output species $Y$.
Therefore Lemma~\ref{lem-compute-linear-fast} is stated in terms of an encoding of affine partial functions that allows monotonic production of outputs, encoding the output value $\vy(j)$ as the difference between the counts of two monotonically produced species $Y_j^P$ and $Y_j^C$, using the same technique used in the proofs of Lemmas~\ref{lem-comp-pres} and~\ref{lem-pres-comp}.

Let $f:\N^k\dashrightarrow\N^l$ be an affine partial function, where, letting $\vy = f(\vx)$, for all $j \in \{1,\ldots,l\}$,  $\vy(j) = b_j + \frac{1}{d_j} \sum_{i=1}^k n_{i,j} (\vx(i) - c_i)$ for integer $n_{i,j}$ and nonnegative integer $b_j$, $c_i$, and $d_j$.
Define $\hat{f}:\N^k\dashrightarrow\N^{l}\times\N^{l}$ as follows.
For each $\vx\in \dom f$, define $\vy_C\in\N^l$ for each $j \in\{1,\ldots,l\}$ as $\vy_C(j) = - \frac{1}{d_j} \sum_{i=1}^k \min\{0,n_{i,j}\} (\vx(i) - c_i)$.
That is, $\vy_C(j)$ is the negation of the $j$'th coordinate of the output if taking the weighted sum of the inputs on only those coordinates with a negative coefficient $n_{i,j}$.
The value $\vy_P(j)$ is then similarly defined for all the positive coefficients and the $b_j$ offset: for each $\vx\in \dom f$, define $\vy_P\in\N^l$ for each $j \in\{1,\ldots,l\}$ as $\vy_P(j) = b_j + \frac{1}{d_j} \sum_{i=1}^k \max\{0,n_{i,j}\} (\vx(i) - c_i)$.
Because $\vx(i)-c_i \geq 0$, $\vy_P$ and $\vy_C$ are always nonnegative.
Then if $\vy = f(\vx)$, we have that $\vy = \vy_P - \vy_C$.
Define $\hat{f}$ as $\hat{f}(\vx) = (\vy_P,\vy_C)$.

\begin{lem} \label{lem-compute-linear-fast}
  Let $f:\N^k\dashrightarrow\N^l$ be an affine partial function.
  Then there is a CRC that computes $\hat{f}:\N^k\dashrightarrow\N^{l}\times\N^{l}$ in expected time $O(n \log n)$, such that the output molecules monotonically increase with time (i.e. none are ever consumed), and at most $O(n)$ molecules are ever produced.
\end{lem}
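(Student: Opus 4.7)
The plan is to build the CRC as a three-stage pipeline for each input coordinate $X_i$: first subtract $c_i$ using a leader, then produce fractional output contributions via an epidemic, then divide by $d_j$ using a cyclic counter. For the subtraction, I place one copy of a leader $L_i^{(0)}$ in the initial context and add the chain $L_i^{(m-1)} + X_i \to L_i^{(m)}$ for $m = 1,\ldots,c_i$. Since exactly one $L_i^{(m)}$ is present at any time, these reactions fire strictly sequentially and consume the first $c_i$ copies of $X_i$ before any output-producing reaction becomes enabled; with propensity $\Theta(1)$ for each (one leader molecule times $\Theta(n)$ copies of $X_i$ divided by $v = \Theta(n)$), this stage takes $O(c_i) = O(1)$ expected time. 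Once $L_i^{(c_i)}$ is reached, I fold frac-production into an epidemic to avoid a competing unimolecular decay: the activation $L_i^{(c_i)} + X_i \to L_i^{(c_i)} + X_i' + \sum_j (n_{i,j}^+ Y_j^{P,\text{frac}} + n_{i,j}^- Y_j^{C,\text{frac}})$ seeds one marker, and $X_i + X_i' \to 2 X_i' + \sum_j (n_{i,j}^+ Y_j^{P,\text{frac}} + n_{i,j}^- Y_j^{C,\text{frac}})$ spreads it through the remaining $\vx(i) - c_i$ copies in $O(\log n)$ expected time by the standard logistic-growth analysis.

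For the division, since the kinetic model is restricted to bimolecular reactions, I implement $d_j Y^{\text{frac}} \to Y$ by a single cyclic counter: initial context provides one copy of $s^{(j,0)}$, with reactions $Y_j^{P,\text{frac}} + s^{(j,m)} \to s^{(j,m+1)}$ for $m < d_j-1$ and $Y_j^{P,\text{frac}} + s^{(j,d_j-1)} \to s^{(j,0)} + Y_j^P$ (and an analogous counter for the $C$ channel). The additive constants $b_j$ enter as $b_j$ copies of $Y_j^P$ in the initial context.

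For correctness of stable computation, I will argue that every terminal configuration satisfies $\#X_i = 0$ and $\#Y_j^{P,\text{frac}} = \#Y_j^{C,\text{frac}} = 0$ for all $i,j$, since otherwise a leader, activation, epidemic, or counter reaction would still be enabled. Counting reactions, the total frac produced equals $\sum_i n_{i,j}^\pm (\vx(i) - c_i)$; by the integrality of $\hat{f}$, this is a multiple of $d_j$, so the counter returns cleanly to $s^{(j,0)}$ and yields $\#Y_j^P = b_j + \frac{1}{d_j}\sum_i n_{i,j}^+(\vx(i) - c_i) = \vy_P(j)$ (analogously for $\vy_C(j)$). Monotonicity is immediate since $Y_j^P$ and $Y_j^C$ appear only as products; the total molecular count is $O(n)$ since each $X_i$ contributes $O(1)$ frac molecules plus one $X_i'$, and all leaders and counters together contribute only $O(1)$ auxiliary molecules.

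The main obstacle is the time bound for the counter: with only a single counter copy of $s^{(j,\cdot)}$, the expected time for each of the $\Theta(n)$ frac-consuming reactions is $\Theta(v/\#Y^{\text{frac}})$, and summing the resulting harmonic series gives total expected time $O(v \log n) = O(n \log n)$. This dominates the $O(\log n)$ contributions from leader absorption and epidemic spread, matching the claimed bound. I must also check that the reachable configuration space is bounded for any fixed input (so that fair execution holds with probability one in the kinetic model, as noted in the preliminaries), which follows since every reaction strictly decreases $\sum_i \#X_i + \sum_j (\#Y_j^{P,\text{frac}} + \#Y_j^{C,\text{frac}})$ or leaves it unchanged while advancing a bounded leader/counter state.
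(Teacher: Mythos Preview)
Your three-stage pipeline---a leader to absorb the first $c_i$ copies of $X_i$, conversion of the remaining copies into ``frac'' molecules with the appropriate multiplicities, and a single cyclic counter per output channel to divide by $d_j$---is exactly the paper's construction, with the counter as the $O(n\log n)$ bottleneck in both. The one substantive difference is that you spread the conversion by an epidemic $X_i + X_i' \to 2X_i'$, whereas the paper uses leader-catalyzed conversion $C_i^{c_i} + X_i \to C_i^{c_i} + X_i'$ (already $O(n\log n)$ by a coupon-collector argument); your variant is faster on that stage but buys nothing overall since the counter dominates.

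Two minor slips worth tightening: your $\Theta(1)$ and $O(\log n)$ time claims for the leader and epidemic stages tacitly assume $\vx(i) = \Theta(n)$; when $\vx(i)$ is small the propensities drop and each stage can take $\Theta(n)$ (still within the final bound, and the paper itself only claims $O(n)$ and $O(n\log n)$ for these stages). Also, your potential function for bounded reachability is not monotone under the epidemic/seed reactions, which consume one $X_i$ but can emit several frac molecules; reweighting $\#X_i$ by $1+\sum_j(n_{i,j}^+ + n_{i,j}^-)$ fixes this.
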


\newcommand{\hL}{\widehat{L}}

\begin{proof}
  If $(\vy_P,\vy_C) = \hat{f}(\vx)$, then there exist $kl$ integers $n_{1,1},\ldots,n_{k,l}\in\Z$ and $2l+k$ nonnegative integers $b_1,\ldots,b_l,c_1,\ldots,c_k,d_1,\ldots,d_l\in\N$ such that, if $\vy = f(\vx)$, then for each $j \in \{1,\ldots,l\}$,
$\vy_C(j) = - \sum_{i=1}^k \frac{1}{d_j} \min\{0,n_{i,j}\} (\vx(i) - c_i)$ and $\vy_P(j) = b_j + \frac{1}{d_j} \sum_{i=1}^k \max\{0,n_{i,j}\} (\vx(i) - c_i)$.
  Define the CRC as follows.
  It has input species $\Sigma = \{X_1,\ldots,X_k\}$ and output species $\Gamma = \{Y_1^P,\ldots,Y_l^P,Y_1^C,\ldots,Y_l^C\}$.

  For each $j \in \{1,\ldots,l\}$, start with $b_j$ copies of $Y_j^P$.
  This accounts for the $b_j$ offsets.

  For each $i \in \{1,\ldots,k\}$, start with a single molecule $C_{i}^{0}$, and for each $m \in \{0,\ldots,c_{i}-1\}$, add the reactions
\begin{eqnarray}
    C_{i}^{m} + X_{i} &\to& C_{i}^{m+1} \label{rxn-leader-count-c-offset}
\\
    C_{i}^{c_{i}} + X_{i} &\to& C_{i}^{c_{i}} + X'_{i} \label{rxn-leader-convert-post-offset-X}
\end{eqnarray}
This accounts for the $c_i$ offsets by eventually producing $\vx(i) - c_i$ copies of $X'_i$.
Reaction~\eqref{rxn-leader-count-c-offset} takes time $O(n)$ to complete because each reaction takes time at most $O(n)$ and a constant number, $c_i$, of such reactions must take place.
Once $C_{i}^{c_{i}}$ is produced (hence there are now $\vx(i) - c_i$ copies of $X_i$), reaction~\eqref{rxn-leader-convert-post-offset-X} takes time $O(n \log n)$ to complete by a coupon collector argument.

For each $i \in \{1,\ldots,k\}$, add the reaction
  \begin{eqnarray}
    X'_i &\to& X_{i,1} + X_{i,2} + \ldots + X_{i,l} \label{rxn-X-duplicate-for-each-output}
  \end{eqnarray}
This allows each output to be associated with its own copy of the input.
Reaction~\eqref{rxn-X-duplicate-for-each-output} takes time $O(\log n)$ to complete.

For each $i \in \{1,\ldots,k\}$ and $j \in \{1,\ldots,l\}$, if $n_{i,j} > 0$, add the reaction
\begin{eqnarray}
    X_{i,j} &\to& n_{i,j} Z_{j}^P \label{rxn-X-convert-to-Z-positive}
\end{eqnarray}
and if $n_{i,j} < 0$, add the reaction
\begin{eqnarray}
    X_{i,j} &\to& (-n_{i,j}) Z_{j}^C \label{rxn-X-convert-to-Z-negative}
\end{eqnarray}
Reaction \eqref{rxn-X-convert-to-Z-positive} produces $d_j (\vy_P(j) - b_j)$ copies of $Z_j^P$, and reaction \eqref{rxn-X-convert-to-Z-negative} produces $d_j \vy_C(j)$ copies of $Z_j^C$.
Each takes time $O(\log n)$ to complete.

Finally, to produce the correct number of $Y_j^P$ and $Y_j^C$ output molecules, we must divide the count of each $Z_j^P$ and $Z_j^C$ by $d_j$.
For each $j \in \{1,\ldots,l\}$, start with a single copy of a molecule $D_{j}^{0,P}$ and another $D_{j}^{0,C}$.
For each $j \in \{1,\ldots,l\}$ and each $m\in\{0,\ldots,d_{j}-1\}$, add the reactions
\begin{eqnarray*}
    D_{j}^{m,P} + Z_{j}^P &\to& \left\{
                              \begin{array}{ll}
                                D_{j}^{m+1,P}, & \hbox{if $m < d_{j}-1$;} \\
                                D_{j}^{0,P} + Y_j^P, & \hbox{if $m = d_{j}-1$.}
                              \end{array}
                              \right.
\\
    D_{j}^{m,C} + Z_{j}^C &\to& \left\{
                              \begin{array}{ll}
                                D_{j}^{m+1,C}, & \hbox{if $m < d_{j}-1$;} \\
                                D_{j}^{0,C} + Y_j^C, & \hbox{if $m = d_{j}-1$.}
                              \end{array}
                              \right.
\end{eqnarray*}
These reactions implement this division.
By a coupon collector argument, they each require time $O(n \log n)$ to complete.
\qedl\end{proof}

The next lemma shows that every semilinear function $f$ can be computed by a CRC in $O(n \log n)$ time.
It uses a systematic construction based on breaking down $f$ into a finite number of partial affine functions
$f_1,\ldots,f_m$, in which deciding which $f_i$ to apply is itself a semilinear predicate.
Intuitively, the construction proceeds by running many CRCs and CRDs in parallel on input $\vx$, computing all $f_i$'s and all predicates of the form $\phi_i =$ ``$\vx \in \dom f_i$?''
The $\phi_i$ predicate computation is used to activate (in the case of a ``yes'' answer) or deactivate (in case of ``no'') the outputs of $f_i$.
Since eventually one CRD stabilizes to ``yes'' and the remainder to ``no'', eventually the outputs of one $f_i$ are activated and the remainder deactivated, so that the value $f(\vx)$ is properly computed.

\begin{lem} \label{lem-compute-semilinear-n-log-n}
  Let $f:\N^k\to\N^l$ be semilinear.  Then there is a CRC $\calC$ that stably computes $f$, and the expected time for $\calC$ to reach a count-stable configuration on input $\vx$ is $O(n \log n)$ (where the $O()$ constant depends on $f$).
\end{lem}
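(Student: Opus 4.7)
The plan is to realize the scheme illustrated in Figure~\ref{fig:systematicexample}: decompose $f$ into a finite number of partial affine functions $f_1,\ldots,f_m$ whose domains cover $\N^k$, decide (in parallel) which domain contains the input, compute (in parallel) all $\hat f_i$ monotonically via Lemma~\ref{lem-compute-linear-fast}, and then use the predicate answers to route the appropriate output into the final species $Y_1,\ldots,Y_l$.

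First I would invoke Lemma~\ref{lem-semilinear-function-finite-union-linear} to write $f = \bigcup_{i=1}^m (\dom f_i\to f_i)$ where each $f_i$ is a partial affine function with semilinear domain. The predicates $\phi_i(\vx) = [\vx\in\dom f_i]$ are thus semilinear, so by the Angluin--Aspnes--Diamadi--Fischer--Ren\'e theorem each is decidable by a CRD $\calD_i$ in expected time $O(n\log n)$, with voting species $L_i^1,L_i^0$. Using the standard trick $X_r \to X_r^{(1)}+\cdots+X_r^{(2m)}$ we duplicate the input into one copy per parallel subcomputation (adding only $O(\log n)$ time), and run $\calD_1,\ldots,\calD_m$ together with the $m$ monotone CRCs from Lemma~\ref{lem-compute-linear-fast} computing $\hat f_i$, yielding species $P_i^j$ and $C_i^j$ with $P_i^j-C_i^j=\vy(j)$ whenever $\vx\in\dom f_i$. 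All these parallel subcomputations stabilize in expected time $O(n\log n)$ and produce only $O(n)$ molecules, keeping the volume $v=O(n)$.

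The main obstacle is the \emph{non-monotonic} gluing at the top: exactly one $\calD_i$ will eventually answer yes, but along the way several may transiently vote yes, and we must be able to install \emph{and later retract} any of the $f_i$ outputs without consuming $Y_j$ irreversibly. I handle this, following the example of Fig.~\ref{fig:systematicexample}, by introducing backup species $\hat P_i^j$ and $\hat C_i^j$ that are produced in lock-step with $P_i^j$ and $C_i^j$ by the monotone CRCs. For each $i,j$ I add catalytic reactions of the form $\hat P_i^j+L_i^1\to\hat P_i^j+L_i^1+Y_j$ and $\hat C_i^j+L_i^1\to\hat C_i^j+L_i^1+\overline{Y_j}$ (and $L_i^1+Y_j+\overline{Y_j}\to L_i^1$ annihilation) to push the count of $Y_j$ toward $\hat P_i^j-\hat C_i^j$ whenever $L_i^1$ is present, together with reverse reactions of the form $L_i^0+Y_j\to L_i^0+\overline{Y_j}$ and $L_i^0+\overline{Y_j}\to L_i^0+Y_j$ that, in the presence of the ``no'' vote, allow $Y_j$ to drift back toward $0$. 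Crucially, the drivers $\hat P_i^j,\hat C_i^j$ are never consumed, so installing the output of $f_i$ can be attempted arbitrarily many times; and because $Y_j$ is represented through a produce/consume pair $(Y_j,\overline{Y_j})$, the count can be both raised and lowered without irreversibly losing the output. By fairness, once all $\calD_i$ have stabilized (exactly one $L_{i^*}^1$ positive, all other $L_i^0$ positive), the only count-stable configuration for $Y_j$ is the value $\hat P_{i^*}^j-\hat C_{i^*}^j=f(\vx)_j$.

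For the time bound: the input-splitting, the offset subtractions $\vx(i)-c_i$ in Lemma~\ref{lem-compute-linear-fast}, the affine computations, and each predicate $\calD_i$ all finish in expected time $O(n\log n)$, and these are constant many parallel processes. Once the last $\calD_i$ has stabilized, a single coupon-collector-style argument on the catalytic output reactions (whose propensities are $\Theta(n/v)=\Theta(1)$ per needed event and which must fire $O(n)$ times to bring $Y_j,\overline{Y_j}$ to agreement with the correct $\hat P_{i^*}^j,\hat C_{i^*}^j$) gives another $O(n\log n)$. Summing the constant number of $O(n\log n)$ phases yields the claimed expected time $O(n\log n)$ to reach a count-stable configuration, completing the proof sketch; correctness follows from the observation above that every fair execution ends in the unique configuration in which $Y_j$'s count matches the output of the one $f_i$ selected by the stabilized predicate answers.
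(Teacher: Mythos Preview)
Your high-level plan matches the paper's: decompose $f$ via Lemma~\ref{lem-semilinear-function-finite-union-linear}, run the affine computations of Lemma~\ref{lem-compute-linear-fast} and the domain predicates in parallel, then route the selected answer to $Y_1,\ldots,Y_l$. However, the routing mechanism you describe does not work, and this is the heart of the lemma.

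First, your reactions $\hat P_i^j + L_i^1 \to \hat P_i^j + L_i^1 + Y_j$ and $\hat C_i^j + L_i^1 \to \hat C_i^j + L_i^1 + \overline{Y_j}$ are \emph{purely catalytic} in $\hat P_i^j,\hat C_i^j$. Once the correct $L_{i^*}^1$ is permanently present, these reactions remain applicable forever and produce $Y_j$ and $\overline{Y_j}$ without bound; the annihilation $Y_j+\overline{Y_j}\to\emptyset$ cannot stop the net count of $Y_j$ from drifting (and the configuration space becomes unbounded, so the kinetic model you need for the time analysis is not even well-defined). There is no count-stable configuration. The paper avoids this by \emph{consuming} the inactive copy when activating: $L_i^1+\hY^P_{i,j}\to L_i^1+Y^P_{i,j}+Y_j$, with reverse reactions $L_i^0+Y^P_{i,j}\to L_i^0+M_{i,j}$ and $M_{i,j}+Y_j\to\hY^P_{i,j}$, so that $\#\hY^P_{i,j}+\#Y^P_{i,j}+\#M_{i,j}$ is conserved and the invariant $\#Y_j=\sum_i(\#Y^P_{i,j}+\#M_{i,j})+\#K_j$ holds at all times. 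This bookkeeping is exactly what makes the activation/deactivation reversible yet bounded.

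Second, your ``undo'' reactions $L_i^0+Y_j\leftrightarrow L_i^0+\overline{Y_j}$ merely shuffle $Y_j$ and $\overline{Y_j}$; they do not drive $\#Y_j$ toward $0$, and since $Y_j$ is shared across all $i$, a ``no'' vote from one branch interferes with the output installed by another. Relatedly, you assert that ``exactly one $\calD_i$ will eventually answer yes,'' but Lemma~\ref{lem-semilinear-function-finite-union-linear} does not give disjoint domains; the paper explicitly redefines $\phi_i$ as ``$\vx\in\dom f_i$ and $\vx\notin\dom f_{i'}$ for all $i'<i$'' to force mutual exclusivity. Without both the conserved active/inactive conversion and the disjoint predicates, neither stable correctness nor the $O(n\log n)$ bound goes through.
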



\begin{proof}  
  By Lemma~\ref{lem-pres-comp}, there is a CRC $\calC_s$ that stably computes $f$.
  However, that CRC is too slow to use in this proof.
  We provide an alternative proof that every semilinear function can be computed by a CRC in expected time $O(n \log n)$.
  Rather than relying on a random search of the output space as in Lemma~\ref{lem-pres-comp}, it computes the function more directly.
Our CRC will have input species $\Sigma = \{X_1, \ldots, X_k\}$ and output species $\Gamma = \{Y_1,\ldots,Y_l\}$.

  By Lemma~\ref{lem-semilinear-function-finite-union-linear}, there is a finite set $F=\{f_1:\N^k \dashrightarrow \N^l,\ldots,f_m:\N^k \dashrightarrow \N^l\}$ of affine partial functions, where each $\dom f_i$ is a linear set, such that, for each $\vx\in\N^k$, if $f_i(\vx)$ is defined, then $f(\vx)=f_i(\vx)$.
  We compute $f$ on input $\vx$ as follows.
  Since each $\dom f_i$ is a linear (and therefore semilinear) set, we compute each predicate $\phi_i = $ ``$\vx \in \dom f_i$ and $(\forall i' \in \{1,\ldots,i-1\})\ \vx \not\in \dom f_{i'}$?'' by separate parallel CRD's.
  (The latter condition ensures that for each $\vx$, precisely one of the predicates is true, in case the domains of the partial functions have nonempty intersection.)

  By Lemma~\ref{lem-compute-linear-fast}, we can compute each $\hat{f}_i$ by parallel CRC's.
  Assume that for each $i \in \{1,\ldots,m\}$ and each $j \in \{1,\ldots,l\}$, the $j$th pair of outputs $\vy_P(j)$ and $\vy_C(j)$ of the $i$th function is represented by species $\hY_{i,j}^P$ and $\hY_{i,j}^C$.
  We interpret each $\hY_{i,j}^P$ and $\hY_{i,j}^C$ as an ``inactive'' version of ``active'' output species $Y_{i,j}^P$ and $Y_{i,j}^C$.

  For each $i \in \{1,\ldots,m\}$, we assume that the CRD computing the predicate $\phi_i$ represents its output by voting species $L_i^1$ to represent ``yes'' and $L_i^0$ to represent ``no''.
  Then add the following reactions for each $i \in \{1,\ldots,m\}$ and each $j \in \{1,\ldots,l\}$:
  \begin{eqnarray*}
    L_i^1 + \hY_{i,j}^P &\to& L_i^1 + Y_{i,j}^P + Y_j
    \\
    L_i^0 + Y_{i,j}^P &\to& L_i^0 + M_{i,j}
    \\
    M_{i,j} + Y_j &\to& \hY_{i,j}^P
  \end{eqnarray*}
(The latter two reactions implement the reverse direction of the first reaction using only bimolecular reactions.)
Also add the reactions
  \begin{eqnarray*}
    L_i^1 + \hY_{i,j}^C &\to& L_i^1 + Y_{i,j}^C
    \\
    L_i^0 + Y_{i,j}^C &\to& L_i^0 + \hY_{i,j}^C
  \end{eqnarray*}
and
  \begin{eqnarray*}
    Y_{i,j}^P + Y_{i,j}^C &\to& K_{j}
    \\
    K_{j} + Y_j &\to& \emptyset
  \end{eqnarray*}

  That is, a ``yes'' answer for function $i$ activates the $i$th output and a ``no'' answer deactivates the $i$th output.
  Eventually each CRD stabilizes so that precisely one $i$ has $L_i^1$ present, and for all $i' \neq i$, $L_{i'}^0$ is present.
  At this point, all outputs for the correct function $\hat{f}_i$ are activated and all other outputs are deactivated.
  The reactions enforce that at any time, $\# Y_j = \sum_{i=1}^m \# Y_{i,j}^P + \# K_j + \# M_{i,j}$.
  In particular, $\# Y_j \geq \# K_j$ and $\# Y_j \geq \# M_{i,j}$ at all times, so there will never be a $K_j$ or $M_{i,j}$ molecule that cannot participate in the reaction of which it is a reactant.
  Eventually $\# Y_{i,j}^P$ and $\# Y_{i,j}^C$ stabilize to 0 to for all but one value of $i$ (by the fifth reaction), and for this value of $i$, $\# Y_{i,j}^P$ stabilizes to $\vy(j)$ and $\# Y_{i,j}^C$ stabilizes to 0 (by the second-to-last reaction).
  Eventually $\# K_j$ stabilizes to 0 by the last reaction.
  Eventually $\# M_{i,j}$ stabilizes to 0 since $L_i^0$ is absent for the correct function $\hat{f}_i$.
  This ensures that $\# Y_j$ stabilizes to $\vy(j)$.

  It remains to analyze the expected time to stabilization.
  Let $n = \|\vx\|$.
  By Lemma~\ref{lem-compute-linear-fast}, the expected time for each affine function computation to complete is $O(n \log n)$.
  Since the $\hY_{i,j}^P$ are produced monotonically, the most $Y_{i,j}^P$ molecules that are ever produced is $\#_\infty \hY_{i,j}^P$.
  Since 
we have $m$ computations in parallel, the expected time for all of them to complete is at most $O((n \log n) m) = O(n \log n)$ (since $m$ depends on $f$ but not $n$).
  We must also wait for each predicate computation to complete.
  By Theorem~\ref{thm-semilinear}, each of these predicates takes expected time at most $O(n \log n)$ to complete, so all of them complete in expected time at most $O(m n \log n) = O(n \log n)$.

  At this point, the $L^i_1$ leaders must convert inactive output species to active, and $L^{i'}_0$ (for $i' \neq i$) must convert active output species to inactive.
  A similar analysis to the proof of Lemma~\ref{lem-compute-semilinear-n-log-n} shows that each of these requires at most $O(n \log n)$ expected time, therefore they all complete in expected time at most $O((n \log n) m) = O(n \log n)$.
  Finally, a similar argument shows that it requires at most expected time $O(n \log n)$ for the final two reactions to consume all $Y_{i,j}^C$ and $K_j$ molecules, at which point the system has stabilized.
\qedl\end{proof}

\begin{lem}\label{lem-semilinear-function-finite-union-linear}
  Let $f:\N^k\to\N^l$ be a semilinear function.
  Then there is a finite set $\{f_1:\N^k \dashrightarrow \N^l,\ldots,f_m:\N^k \dashrightarrow \N^l\}$ of affine partial functions, where each $\dom f_i$ is a linear set, such that, for each $\vx\in\N^k$, if $f_i(\vx)$ is defined, then $f(\vx)=f_i(\vx)$, and $\bigcup_{i=1}^m \dom f_i = \N^k$.
\end{lem}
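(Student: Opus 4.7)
The plan is to take the semilinear decomposition of the graph of $f$ and read off one partial affine function per linear piece. Since $f$ is semilinear, its graph $G=\{(\vx,\vy)\mid f(\vx)=\vy\}$ can be written as a finite union $G=\bigcup_{i=1}^m L_i$ of linear sets $L_i\subseteq\N^{k+l}$. For each $i$, write $L_i=\{\vb_i+\sum_j n_j\vu_{i,j}\mid n_j\in\N\}$, and split each vector into its $x$- and $y$-parts: $\vb_i=(\vc_i,\vd_i)$ with $\vc_i\in\N^k$, $\vd_i\in\N^l$, and $\vu_{i,j}=(\vp_{i,j},\vq_{i,j})$. Define $D_i:=\{\vc_i+\sum_j n_j\vp_{i,j}\mid n_j\in\N\}\subseteq\N^k$, the projection of $L_i$ onto the input coordinates; this is manifestly linear. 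Take $f_i:=f\rest D_i$ as the candidate partial affine function, so $\bigcup_i\dom f_i=\N^k$ follows from $\bigcup_i L_i=G$ and the fact that $(\vx,f(\vx))\in G$ for every $\vx\in\N^k$.

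The main content is showing that $f_i$ is affine on $D_i$ in the sense of Section~\ref{sec-prelim}. I claim there is a rational linear map $T_i:\Q^k\to\Q^l$ with $T_i(\vp_{i,j})=\vq_{i,j}$ for every $j$. The obstruction is well-definedness: suppose $\sum_j\alpha_j\vp_{i,j}=0$ for rationals $\alpha_j$; we must show $\sum_j\alpha_j\vq_{i,j}=0$. Clearing denominators gives integers $\beta_j$, and splitting $\beta_j=\beta_j^+-\beta_j^-$ into nonnegative integer parts yields two tuples of coefficients $(\beta_j^+)$ and $(\beta_j^-)$ that produce the same $x$-coordinate $\vc_i+\sum_j\beta_j^+\vp_{i,j}=\vc_i+\sum_j\beta_j^-\vp_{i,j}$ in $L_i$. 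Because $L_i\subseteq G$ and $f$ is a function, the corresponding $y$-coordinates must agree, giving $\sum_j\beta_j^+\vq_{i,j}=\sum_j\beta_j^-\vq_{i,j}$, hence $\sum_j\alpha_j\vq_{i,j}=0$. So $T_i$ is well-defined on the $\Q$-span of the $\vp_{i,j}$'s; extend it arbitrarily to a rational matrix $\vA_i$ on all of $\Q^k$.

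Now for any $\vx\in D_i$, write $\vx=\vc_i+\sum_j n_j\vp_{i,j}$ with $n_j\in\N$; then
\[
f(\vx)=\vd_i+\sum_j n_j\vq_{i,j}=\vd_i+T_i\bigl(\sum_j n_j\vp_{i,j}\bigr)=\vd_i+\vA_i(\vx-\vc_i).
\]
Since $\vc_i\in\N^k$ and $\vd_i\in\N^l$ are nonnegative integer offsets, $\vA_i$ is rational, and $\vx-\vc_i=\sum_j n_j\vp_{i,j}\geq 0$ componentwise (as the $n_j$'s and $\vp_{i,j}$'s are nonnegative), $f_i$ meets the definition of an affine partial function. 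The set $\dom f_i=D_i$ is linear by construction, so the collection $\{f_1,\ldots,f_m\}$ satisfies all the required properties.

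The one subtle step, and the only real obstacle, is the well-definedness of $T_i$: this is exactly where we use that $f$ is a function (and not merely that $G$ is semilinear), via the $(\beta_j^+,\beta_j^-)$ trick which keeps coefficients in $\N$ so that both points actually lie in $L_i$. The rest is bookkeeping about projections of linear sets and rewriting the parametric description in matrix form.
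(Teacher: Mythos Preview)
Your proof is correct and follows the same overall strategy as the paper: decompose the graph into linear pieces $L_i$, take the projection $D_i$ onto the input coordinates as the (linear) domain, and use the fact that $f$ is a function to extract an affine formula on each piece. The key step in both arguments is the same observation---two points of $L_i$ agreeing on their $x$-coordinates must agree on their $y$-coordinates---but you package it differently.

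The paper works one output coordinate at a time: it shows the vector $(0,\ldots,0,1)$ is not in the real span of the period vectors (using a ``choose $n_i$ large enough'' trick to stay in $\N$), then extends a maximal independent subset to a basis of a $k$-dimensional subspace avoiding that direction, and finally inverts the resulting $k\times k$ matrix $\vec{V}$ to obtain the coefficients explicitly. Your argument is more direct and handles all $l$ output coordinates simultaneously: you define the linear map $T_i$ on the $\Q$-span of the $\vp_{i,j}$ by $\vp_{i,j}\mapsto\vq_{i,j}$, and the $(\beta_j^+,\beta_j^-)$ split is a cleaner device than the paper's for producing two genuine points of $L_i$ with the same $x$-part. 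Extending $T_i$ arbitrarily then replaces the paper's basis-extension-and-inversion step. What the paper's route buys is an explicit formula for the rational coefficients via $\vec{V}^{-1}$; what yours buys is brevity and no need to argue separately that the projected vectors $\vv_1,\ldots,\vv_k$ are independent.
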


We split the semilinear function into partial functions, each with a graph that is a linear set.
The non-trivial aspect of our argument is showing that (straightforward) linear algebra can be used to solve our problem about integer arithmetic.
For example, consider a partial function defined by the following linear graph:
$\vec{b} = \vec{0}$, $\vec{u}_1 = (1, 1, 1)$, $\vec{u}_2 = (2, 0, 1)$, $\vec{u}_3 = (0, 2, 1)$ (where the first two coordinates are inputs and the last coordinate is the output).
Note that the set of points where this function is defined is where $x_1 + x_2$ is even.
Given an input point $\vec{x}$,
the natural approach to evaluating the function
is to solve for the coefficients $n_1, n_2, n_3$ such that
$\vec{x}$ can be expressed as a linear combination of $\vec{u}_1, \vec{u}_2, \vec{u}_3$ restricted to the first two coordinates.
Then the linear combination of the last coordinate of $\vec{u}_1, \vec{u}_2, \vec{u}_3$ with coefficients $n_1, n_2, n_3$ would give the output.
However, the vectors $\vec{u}_1, \vec{u}_2, \vec{u}_3$ are not linearly independent (yet this linear set cannot be expressed with less than three basis vectors --- illustrating the difference between real spaces and integer-valued linear sets), so there are infinitely many real-valued solutions for the coefficients.
We show that $\vec{u}_i$ must span a real subspace with at most one output value for any input coordinates.
Then we can throw out a vector (say $\vec{u}_1$) to obtain a set of linearly independent vectors ($\vec{u}_2, \vec{u}_3$) and solve for $n_2, n_3 \in \R$, and let $n_1 = 0$.
In this example, the resulting partial affine function is $f(x_1,x_2) = (x_1 + x_2)/2$.

\begin{proof}
  Let
  $
    F = \setl{ (\vx,\vy) \in \N^k \times \N^l }{ f(\vx) = \vy }
  $
  be the graph of $f$.
  Since $F$ is semilinear, it is a finite union of linear sets $\{L_1,\ldots,L_n\}$.
  It suffices to show that each of these linear sets $L_m$ is the graph of an affine partial function.
  Since $L_m$ is linear, its projection onto any subset of its coordinates is linear.
  Therefore $\dom f_m$ (the projection of $L_m$ onto its first $k$ coordinates) is linear.

  We consider each output coordinate separately, since if we can show that each $\vy(j)$ is an affine function of $\vx$, then it follows that $\vy$ is an affine function of $\vx$.
  Fix $j \in \{1,\ldots,l\}$. Let $L'_m$ be the $(k+1)$-dimensional projection of $L_m$ onto the coordinates defined by $\vx$ and $\vy(j)$, which is linear because $L_m$ is.
  Since $L'_m$ is linear, there exist vectors $\vec{b},\vec{u}_1,\ldots,\vec{u}_p \in\N^{k+1}$ such that $L'_m=\setl{\vec{b} + n_1 \vec{u}_1 + \ldots + n_p \vec{u}_p}{n_1,\ldots,n_p\in\N}.$


Consider the real-vector subspace spanned by $\vec{u}_1, \ldots, \vec{u}_p$.
It cannot contain the vector $\vec{j} = (0,\ldots,0,1)^T$.
Suppose it does.
Take a subset of linearly independent vectors spanning this subspace from the above list (we possibly remove some linearly dependent vectors);
say $\vec{u}_1, \ldots, \vec{u}_{p'}$.
The unique solution to the coefficients $\xi_1, \ldots, \xi_{p'} \in \R$ such that $\vec{j} = \xi_1 \vec{u}_1 + \ldots + \xi_{p'} \vec{u}_{p'}$ can be obtained by using the left-inverse of the matrix with columns $\vec{u}_1, \ldots, \vec{u}_{p'}$ (the left inverse exists because the matrix is full-rank).
Since the elements of the left-inverse matrix are rational functions of the matrix elements,
and vectors $\vec{u}_1, \ldots, \vec{u}_{p'}$ consist of numbers in $\N$, the coefficients $\xi_1, \ldots, \xi_{p'}$ are rational.
We can multiply all the coefficients by the least common multiple of their denominators $c$ yielding
$c \vec{j} = m_1 \vec{u}_1 + \ldots + m_{p'} \vec{u}_{p'}$ where $m_1, \ldots, m_{p'} \in \Z$.
Now consider a point $\vec{a}$ in $L'_m$ defined as $\vec{b} + n_1 \vec{u_1} + \ldots + n_{p'} \vec{u}_{p'}$, where $n_i \in \N$.
We choose $\vec{a}$ such that $n_i$ are large enough that $n'_i \triangleq n_i + m_i \geq 0$.
Since $n'_i \in \N$, we have that both $\vec{a}$ and $\vec{a} + c \vec{j} = \vec{b} + n'_1 \vec{u_1} + \ldots + n'_{p'} \vec{u}_{p'}$ are in $L'_m$.
This is a contradiction because $L'_m$ is the graph of a partial function and cannot contain two different points that agree on their first $k$ coordinates.
Therefore $\vec{j}$ is not contained in the span of $\vu_1,\ldots,\vu_p$.

Consider again the real-vector subspace spanned by $\vec{u}_1, \ldots, \vec{u}_{p}$.
Again, let $\vec{u}_1, \ldots, \vec{u}_{p'}$ be a subset of linearly independent vectors spanning this subspace.
Since $\vec{j}$ is not in it, the subspace must be at most $k$ dimensional.
If it is strictly less than $k$ dimensional, add enough vectors in $\N^{k+1}$ to the basis set for the spanned subspace to be exactly $k$-dimensional but not include $\vec{j}$.
Call this new set of $k$ linearly independent vectors $\vec{w}_1, \ldots, \vec{w}_{k}$, where $\vec{w}_i = \vec{u}_i$ for $i \in \{1,\ldots,p'\}$.
Let $\vec{v}_1, \ldots, \vec{v}_{k} \in \N^{k}$ be $\vec{w}_1, \ldots, \vec{w}_{k}$ restricted to the first $k$ coordinates.
The fact that $\vec{w}_1, \ldots, \vec{w}_{k}$ are linearly independent, but $\vec{j}$ is not in the subspace spanned by them, implies that $\vec{v}_1, \ldots, \vec{v}_k$ are linearly independent as well.
This can be seen as follows.
If $\vec{v}_1, \ldots, \vec{v}_k$ were not linearly independent,
then we could write $\vec{v}_k = \xi_1 \vec{v}_1 + \ldots + \xi_{k-1} \vec{v}_{k-1}$ for some $\xi_i \in \R$.
However, $\vec{w}_k \neq \vec{w}'_k \triangleq \xi_1 \vec{w}_1 + \ldots + \xi_{k-1} \vec{w}_{k-1}$.
Since $\vec{j}$ is proportional to $\vec{w}'_k - \vec{w}_k$, we obtain a contradiction.
Therefore $\vec{v}_1, \ldots, \vec{v}_k$ are linearly independent.

We now describe how to construct an affine function $\vy(j) = f(\vx)$ for $L'_m$ from $\vec{w}_1, \ldots, \vec{w}_{k}$.
Let matrix $\vec{V}$ be the square matrix with $\vec{v}_1, \ldots, \vec{v}_{k}$ as columns.
Let $\vb'$ be $\vb$ restricted to its first $k$ coordinates.
We claim that $\vy(j)  = \vb(k+1) + (\vec{w}_1(k+1), \ldots, \vec{w}_{k}(k+1)) \cdot \vec{V}^{-1} \cdot \left( \vx - \vb' \right)$.
Below we'll show that this expression computes the correct value $\vy(j)$.
But first we show that it defines a partial affine function $f(\vx)$.
Because $\vec{v}_1, \ldots, \vec{v}_{k}$ are linearly independent, the inverse $\vec{V}^{-1}$ is well-defined.
We need to show $f(\vx) = b_j + \frac{1}{d_j} \sum_{i=1}^k n_{i,j} (\vx(i) - c_i)$ for integer $n_{i,j}$ and nonnegative integer $b_j$, $c_i$, and $d_j$,
and that on the domain of $f$, $\vx(i) - c_i \geq 0$.
The offset $b_j = \vb(k+1)$, which is a non-negative integer because $\vb$ is a vector of non-negative integers.
Since the offset vector $\vb'$ is the same for each output dimension, and it is likewise non-negative, we obtain the offset $c_i = \vb'(i)$.
Further, since $\vec{V^{-1}}$ consists of rational elements (because $\vec{V}$ consists of elements in $\N$), we can define $d_j$ and $n_{i,j}$ as needed.
Finally, note that the least value of $\vx(i)$ that could be in $L'_m$ is $\vb'(i) = c_i$,
and thus on the domain of $f$, $\vx(i) - c_i \geq 0$.

Finally, we show that this expression computes the correct value $\vy(j)$.
Let $(\xi_1, \dots, \xi_k)^T \triangleq \vec{V}^{-1} \cdot (\vx - \vb')$,
which implies that $\vx  = \vb' + \sum_{i = 1}^k \xi_i \vv_i$.
If our value of $\vy(j)$ is incorrect, then $\exists n_1, \dots, n_p \in \N$ such that
$\vb + \sum_{i=1}^p n_i \vu_i$ and $\vb + \sum_{i=1}^k \xi_i \vw_i$ agree on the first $k$ coordinates but not on the $k+1$st.
Recall that the real-vector subspace spanned by $\vw_1, \dots, \vw_k$ includes the subspace spanned by $\vu_1, \dots, \vu_p$ but does not include $\vec{j}$.
But $\sum_{i=1}^p n_i \vu_i - \sum_{i=1}^k \xi_i \vw_i$ is proportional to $\vec{j}$
and lies in the subspace spanned by $\vw_1, \dots, \vw_k$.
Therefore we obtain a contradiction, implying that our value of $\vy(j)$ is computed correctly.
\qedl\end{proof}

Angluin, Aspnes, and Eisenstat combined the slow, deterministic predicate-deciding results of \cite{AngluinAE06} with a fast, error-prone simulation of a bounded-space Turing machine to show that semilinear predicates can be computed without error in expected polylogarithmic time~\cite{angluin2006fast}.
We show that a similar technique implies that semilinear functions can be computed by CRNs without error in expected polylogarithmic time in the kinetic model, combining the same Turing machine simulation with our $O(n \log n)$ construction described in Lemma~\ref{lem-compute-semilinear-n-log-n}.

We in fact use the same construction of Angluin, Aspnes, and Eisenstat~\cite{angluin2006fast} in order to conduct the fast, error-prone computation in our proof of Theorem~\ref{thm-func-speed}.
The next theorem formalizes the properties of their construction that we require.

\begin{theorem}[\cite{angluin2006fast}] \label{thm-func-error}
  Let $f:\N^k\to\N^l$ be a function by a $t(m)$-time-bounded, $s(m)$-space-bounded Turing machine, where $m \approx \log n$ is the input length in binary, and let $c\in\N$.
  Then there is a CRC $\calC$ that computes $f$ correctly with probability at least $1- n^{-c}$, and the expected time for $\calC$ to reach a count-stable configuration is $O(t(m)^5)$.
  Furthermore, the total molecular count never exceeds $O(2^{s(m)})$.
\end{theorem}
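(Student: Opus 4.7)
The plan is to observe that the theorem follows from the construction of AAE~\cite{angluin2006fast} with only minor modifications to handle multi-coordinate, integer-valued (rather than Boolean) output. Their original result concerns predicate simulation: a leader molecule is elected (with error probability at most $n^{-c}$ for any chosen constant $c$, in expected time $O(\mathrm{polylog}\ n)$), the inputs are aggregated into a binary representation using the leader as a controller, a $t(m)$-time, $s(m)$-space TM is simulated reaction-by-reaction with the tape contents and head position held as molecular state associated with the leader, and finally a vote is broadcast to all molecules. I would reuse the first three phases verbatim, and replace the Boolean broadcast phase with a decoding of the TM output tape into counts of the output species $Y_1, \ldots, Y_l$.

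First I would invoke AAE's leader-election and input-encoding subroutine, which deposits on the leader a binary encoding of $\vx(1), \ldots, \vx(k)$ using $O(\log n) = O(m)$ bits per input. Next I would run AAE's TM-simulator, which for each TM step uses a constant number of reactions and introduces a $O(t(m)^4)$ multiplicative slowdown, giving the stated $O(t(m)^5)$ total expected time; the molecular-count bound $O(2^{s(m)})$ follows directly because the tape contents and auxiliary bookkeeping occupy at most $O(s(m))$ bits of state, each bit realized by a constant number of molecules, and the leader-driven simulation never creates additional species in larger quantities. For the output phase, once the TM halts, for each $j \in \{1,\ldots,l\}$ the bits $b_0, b_1, \ldots, b_{s(m)-1}$ of $\vy(j) = f(\vx)(j)$ on the TM output tape are converted into $\vy(j)$ copies of $Y_j$ via a doubling cascade $W_j^{(i)} \to 2 W_j^{(i-1)}$ terminating in $W_j^{(0)} \to Y_j$, seeded with one copy of $W_j^{(i)}$ for each bit $b_i=1$. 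This cascade produces $\sum_i b_i 2^i = \vy(j)$ copies of $Y_j$ deterministically, uses at most $O(2^{s(m)})$ molecules, and completes in expected time polynomial in $s(m)$, which is absorbed into the $O(t(m)^5)$ bound.

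The main obstacle is that in the predicate setting AAE tolerate voting species that flip repeatedly while the simulation runs, because the final broadcast resets all votes; in our function setting the counts of the $Y_j$ must be count-stable rather than merely correct in the limit. The fix is to gate the doubling cascade on a ``halt'' signal that AAE's TM simulator naturally produces, and to add these reactions only to the CRC, so that the $Y_j$ are produced exactly once from the finalized binary output and never consumed. Because the entire computation is allowed a probability of error $n^{-c}$, no further consistency machinery is required; on the failure event the CRC may stabilize to the wrong counts, which is accounted for in the error probability. The overall error is the sum of the leader-election and simulation failure probabilities from AAE, each tunable to $n^{-c'}$ for arbitrary constants, so the total error is bounded by $n^{-c}$ as claimed.
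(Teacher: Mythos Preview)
Your approach is essentially the same as the paper's: both treat this theorem as a direct import of the Angluin--Aspnes--Eisenstat construction~\cite{angluin2006fast}, with only light adaptation needed to accommodate integer-valued rather than Boolean output. The paper in fact does not give a proof at all; it states the theorem as a citation and appends a short paragraph of remarks on the required modifications.

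There is one point worth noting in comparison. The paper's commentary singles out a different adaptation than you do: because AAE work in the population-protocol model (two-in, two-out reactions, conserved molecular count), their simulation requires a non-uniform supply of ``fuel'' molecules proportional to $2^{s(m)}$ to represent the tape. In the CRN model molecules may be created, so the paper notes that fuel can instead be generated on the fly via reactions $X_i \to X_i' + cF$. You do not mention this issue, and your line ``each bit realized by a constant number of molecules'' glosses over it; in AAE's actual encoding the tape is represented by a population whose size scales with $2^{s(m)}$, not by a constant number of molecules per bit. Conversely, you give a concrete mechanism (the doubling cascade gated on a halt signal) for converting the TM's binary output into unary counts of $Y_j$, which the paper does not spell out. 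Both adaptations are straightforward, and your sketch is sound once the fuel-supply point is acknowledged.
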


Semilinear functions on an $m$-bit input can be computed in time $O(m)$ and space $O(m)$ on a Turing machine.
Therefore the bounds on CRC expected time and molecular count stated in Theorem~\ref{thm-func-error} are $O(\log^5 n)$ and $O(n)$, respectively, when expressed in terms of the number of input molecules $n$.

Although Angluin, Aspnes, and Eisenstat~\cite{angluin2006fast} exclusively use two-reactant, two-product reactions, and not all of the properties stated in Theorem~\ref{thm-func-error} are explicitly stated in~\cite{angluin2006fast}, their construction can be easily modified to have the stated properties.
Since that construction preserves the total molecular count, they require some non-uniformity to supply enough ``fuel'' molecules $F$, based on the space usage $s(m)$ (which varies with the input size), so that the tape of the Turing machine can be accurately represented throughout the computation.
However, in our model, molecules may be produced.
We compute semilinear functions, where as observed above has total molecular count bounded by $O(n)$, so these fuels may be supplied by letting the first reaction of the input $X_i$ be $X_i \to X_i' + c F$, where $X_i'$ is the input interacting with the rest of the CRC, and $c \in \N$ is chosen sufficiently large.

The following theorem is the main theorem of this section.

\begin{theorem} \label{thm-func-speed}
  Let $f:\N^k\to\N^l$ be semilinear.  Then there is a CRC $\calC$ that stably computes $f$, and the expected time for $\calC$ to reach a count-stable configuration is $O(\log^5 n)$.
\end{theorem}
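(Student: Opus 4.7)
The strategy, following~\cite{angluin2006fast}, is to run two CRCs for $f$ in parallel on the same input: the fast but error-prone CRC $\calC_{\mathrm{fast}}$ furnished by Theorem~\ref{thm-func-error}, and the deterministic $O(n\log n)$-time CRC $\calC_{\mathrm{slow}}$ of Lemma~\ref{lem-compute-semilinear-n-log-n}. Since any semilinear $f$ is computable by a Turing machine in $O(\log n)$ time and space, $\calC_{\mathrm{fast}}$ stabilizes in expected time $O(\log^5 n)$ with total molecular count $O(n)$ and produces the correct output except with probability at most $n^{-c}$, for any constant $c$ we choose; $\calC_{\mathrm{slow}}$ is always correct in expected time $O(n\log n)$. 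If we can couple them so that the composite CRC never stabilizes at an incorrect value and stabilizes quickly when the two agree, the overall expected time is at most
\[
(1-n^{-c})\cdot O(\log^5 n) \;+\; n^{-c}\cdot O(n\log n) \;=\; O(\log^5 n)
\]
once $c>5$. Splitting the input via reactions $X_i \to X_i^{\mathrm{f}} + X_i^{\mathrm{s}} + c' F$, feeding one copy to each subsystem and supplying the fuel $F$ needed by the Turing-machine simulation, is routine and adds only $O(\log n)$ to the time.

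The central difficulty, flagged in the introduction, is reconciling the two candidate outputs nondestructively: unlike in the predicate setting, a disagreement is not a single-bit mismatch between two voters but a difference between two integer counts. The plan is to express both outputs in the monotonic differenced encoding of Lemma~\ref{lem-compute-linear-fast}, with $\calC_{\mathrm{fast}}$ producing monotonic $\tilde Y_j^P, \tilde Y_j^C$ whose difference is its candidate and $\calC_{\mathrm{slow}}$ producing monotonic $Y_j^P, Y_j^C$ whose difference is the truth. A visible output species $Y_j$ is driven by fast ``commit'' reactions that transfer $\tilde Y_j^P$ into $Y_j$ as soon as the fast subsystem flags completion, and by slow ``audit'' reactions of the form $\tilde Y_j^P + Y_j^C \to \cdots$ and $\tilde Y_j^C + Y_j^P \to \cdots$ that pairwise cancel matched tokens without touching $Y_j$. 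When the two candidates agree the cancellation is exact and no adjustment is needed; when they disagree an unmatched residual fires adjustment reactions that add or remove a $Y_j$ molecule. The main obstacle will be designing these reactions so the composite system never has a reachable output count-stable configuration with the wrong value of $Y_j$, while still using only bimolecular interactions.

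Correctness then reduces to exhibiting, from any reachable configuration, a path to an output count-stable configuration with $\# Y_j = f(\vx)_j$. This follows because $\calC_{\mathrm{slow}}$'s monotonic tokens grow to the correct totals regardless of what the audit has done, and every commit and adjustment reaction admits a reverse pathway, so a premature commit to a wrong fast answer can always be undone once the slow subsystem catches up. For the time analysis, conditioned on $\calC_{\mathrm{fast}}$ being correct the commit reactions stabilize $Y_j$ in expected time $O(\log^5 n)$ by standard mass-action arguments at volume $v = \Theta(n)$ (valid because the total molecular count remains $O(n)$ throughout, so the finite-density constraint holds); conditioned on $\calC_{\mathrm{fast}}$ being wrong we fall back on $\calC_{\mathrm{slow}}$'s $O(n\log n)$ time plus a further $O(n\log n)$ for the audit and adjustment, and multiplying by the $n^{-c}$ error bound with $c>5$ renders that contribution negligible, yielding the claimed $O(\log^5 n)$ overall expected time.
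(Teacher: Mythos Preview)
Your proposal identifies the right high-level strategy (run fast-but-risky and slow-but-sure computations in parallel, a la~\cite{angluin2006fast}) and the right difficulty (nondestructive integer comparison), but the mechanism you sketch for resolving that difficulty does not work, and you have not actually filled the gap you yourself flag as ``the main obstacle.''

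The concrete problem is a tension between stability and speed. Your audit reactions $\tilde Y_j^P + Y_j^C \to \cdots$ and $\tilde Y_j^C + Y_j^P \to \cdots$ are meant to cancel matched tokens and leave a residual only when the two answers disagree. But $\calC_{\mathrm{slow}}$ takes $\Theta(n\log n)$ time to finish producing its tokens. During that interval the token counts are incomplete, so from the system's point of view a transient imbalance (slow has not caught up) is indistinguishable from a permanent one (fast is wrong). If your adjustment reactions can fire on a transient residual, then even when $\calC_{\mathrm{fast}}$ is correct, $\#Y_j$ will fluctuate until $\calC_{\mathrm{slow}}$ finishes, destroying the $O(\log^5 n)$ bound. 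If instead you make the adjustment reactions reversible (as you propose), then the configuration is never output count-stable at all, because the reverse reactions remain applicable forever. You cannot have it both ways without an additional idea.

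The paper resolves this with two ingredients you are missing. First, it does not compare the two integer outputs directly; it makes a copy $B_j$ of the fast output and feeds it as \emph{input} to a deterministic CRD deciding the semilinear predicate ``$\vb = f(\vx)$?''. This reduces integer comparison to Boolean predicate evaluation, which CRDs handle stably. Second, it uses a \emph{timed trigger} (a leader that must see a marker $d$ times in a row without interference) that with high probability does not fire until after both the fast CRC and the verifying CRD have stabilized. Only when the trigger fires does the system read the CRD's verdict and, if ``no,'' irrevocably switch to the slow CRC's answer. This is what lets the correct-fast case stabilize in $O(\log^5 n)$ without being disturbed by the slow computation, while still guaranteeing eventual correctness. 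Your time analysis also needs the trigger: the cost of the bad case is not just $O(n\log n)$ but the expected trigger-firing time $O(n^d)$, and the case split (fast correct vs.\ wrong, trigger early vs.\ late) requires choosing $c$ and $d$ jointly.
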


\begin{proof}
  Our CRC will use the counts of $Y_j$ for each output dimension $\vy(j)$ as the global output, and begins by running in parallel:
  \begin{enumerate}
  \item
    A fast, error-prone CRC $\mathcal{F}$  for $\vec{y},\vec{b},\vec{c}=f(\vx)$, as in Theorem~\ref{thm-func-error}. For any constant $c > 0$, we may design $\mathcal{F}$ so that it is correct and finishes in time $O(\log^5 n)$ with probability at least $1- n^{-c}$, while reaching total molecular count never higher than $O(n)$. We modify $\mathcal{F}$ so that upon halting, it copies an ``internal'' output species $\hY_j$ to $Y_j$ (the global output), $B_j$, and $C_j$ through reactions $H + \hY_j \to Y_j + B_j + C_j$ (in asymptotically negligible time).
    Here, $H$ is some molecule that is guaranteed with high probability not to be present until $\mathcal{F}$ has halted, and to be present in large ($\Omega(n)$) count so that the conversion is fast.
    In this way we are guaranteed that the amount of $Y_j$ produced by $\mathcal{C}$ is the same as the amounts of $B_j$ and $C_j$ no matter whether its computation is correct or not.

 \item
    A slow, deterministic CRC $\mathcal{S}$ for $\vy' = f(\vx)$. It is constructed as in Lemma~\ref{lem-compute-semilinear-n-log-n}, running in expected $O(n \log{n})$ time.

 \item
    A slow, deterministic CRD $\mathcal{D}$ for the semilinear predicate ``$\vb = f(\vx)$?". It is constructed as in Theorem~\ref{thm-semilinear} and runs in expected $O(n)$ time.
 \end{enumerate}

  Following Angluin, Aspnes, and Eisenstat~\cite{angluin2006fast}, we construct a ``timed trigger'' as follows, using a leader molecule, a marker molecule, and $n = \|\vx\|$ interfering molecules.
  These interfering molecules can simply be the input species and some of their ``descendants'' such that their count is held constant.
  This can be done for input species $X_i$ by a reaction such as $X_i \to I + X_i'$, where $X_i$ is (one of) the original input species, $I$ is the interfering molecule, and $X_i'$ is the input species interacting with the remainder of the CRC.
  The leader will then interact with both $X_i$ and $I$ as interfering molecules.

  The leader fires the trigger if it encounters the marker molecule $d$ times without any intervening reactions with the interfering molecules.
  This happens rarely enough that with high probability the trigger fires after $\mathcal{F}$ and $\mathcal{D}$ finishes (time analysis is presented below).
  When the trigger fires, it checks if $\mathcal{D}$ is outputting a ``no" (e.g.\ has a molecule of $L_0$), and if so, produces a molecule of $\Fix$.
  This indicates that the output of the fast CRC $\mathcal{F}$ is not to be trusted, and the system should switch from the possible erroneous result of $\mathcal{F}$ to the sure-to-be correct result of $\mathcal{S}$.

    Once a $\Fix$ is produced, the system converts the output molecules $Y'_j$ of the slow, deterministic CRC $\mathcal{S}$ to the global output $Y_j$, and kills enough of the global output molecules to remove the ones produced by the fast, error-prone CRC:

  \begin{eqnarray}
    \Fix + Y'_j   &\to&   \Fix + Y_j   \\
    \Fix + C_j    &\to&  \Fix + \overline{Y}_j     \\
    Y_j + \overline{Y}_j  &\to&  \emptyset.
  \end{eqnarray}

Finally, $\Fix$ triggers a process consuming all species of $\mathcal{F}$ other than $Y_j, B_j$, and $C_j$ in expected $O(\log n)$ time so that afterward, $\mathcal{F}$ cannot produce any output molecules.
More formally, let $Q_\mathcal{F}$ be the set of all species used by $\mathcal{F}$.
For all $X \in Q_\mathcal{F} \setminus \bigcup_{j=1}^l \{Y_j,B_j,C_j\}$, add the reactions
  \begin{eqnarray}
    \Fix + X   &\to&   \Fix + K   \\
    K + X    &\to&  K + K,
  \end{eqnarray}
where $K \not\in Q_\mathcal{F}$ is a unique species.

First, observe that the output will always eventually converge to the right answer, no matter what happens:
If $\Fix$ is eventually produced, then the output will eventually be exactly that given by $\mathcal{S}$ which is guaranteed to converge correctly.
If $\Fix$ is never produced, then the fast, error-prone CRC must produce the correct amount of $Y_j$ --- otherwise, $\mathcal{D}$ will detect a problem.

For the expected time analysis, let us first analyze the trigger.
The probability that the trigger leader will fire on any particular reaction number is at most $n^{-d}$.
In time $n^2$, the expected number of leader reactions is $O(n^2)$.
Thus, the expected number of firings of the trigger in $n^2$ time is $n^{-d+2}$.
This implies that the probability that the trigger fires before $n^2$ time is at most $n^{-d+2}$.
The expected time for the trigger to fire is $O(n^d)$.

We now consider the contribution to the total expected time from $3$ cases:
\begin{enumerate}
	\item $\mathcal{F}$ is correct, and the trigger fires after time $n^2$. There are two subcases:
	(a) $\mathcal{F}$ finishes before the trigger fires. Conditional on this, the whole system converges to the correct answer, never to change it again, in expected time $O(\log^5 n)$. This subcase contributes at most $O(\log^5 n)$ to the total expected time.
	(b) $\mathcal{F}$ finishes after the trigger fires. In this case, we may produce a $\Fix$ molecule and have to rely on the slow CRC $\mathcal{S}$. The probability of this case happening is at most $n^{-c}$. Conditional on this case, the expected time for the trigger to fire is still $O(n^d)$. The whole system converges to the correct answer in expected time $O(n^d)$, because everything else is asymptotically negligible. Thus the contribution of this subcase to the total expectation is at most $O(n^{-c} \cdot n^d) = O(n^{-c+d})$.
	\item $\mathcal{F}$ is correct, but the trigger fires before $n^2$ time. In this case, we may produce a $\Fix$ molecule and have to rely on the slow CRC $\mathcal{S}$ for the output. The probability of this case occurring is at most $n^{-d+2}$. Conditional on this case occurring, the expected time for the whole system to converge to the correct answer can be bounded by $O(n^2)$. Thus the contribution of this subcase to the total expectation is at most $O(n^{-d+2} \cdot n^2) = O(n^{-d+4})$.
	\item $\mathcal{F}$ fails. In this case we'll have to rely on the slow CRC $\mathcal{S}$ for the output again. Since this occurs with probability at most $n^{-c}$, and the conditional expected time for the whole system to converge to the correct answer can be bounded by $O(n^d)$ again, the contribution of this subcase to the total expectation is at most $O(n^{-c} \cdot n^d) = O(n^{-c+d})$.
\end{enumerate}

So the total expected time is bounded by $O(\log^5 n) +  O(n^{-c+d}) + O(n^{-d+4}) + O(n^{-c+d}) =  O(\log^5 n)$ for $d>4, c>d$.
\qedl\end{proof}


\section{Conclusion}
\label{sec-conclusion}

We defined deterministic computation of CRNs corresponding to the intuitive notion that certain systems are guaranteed to converge to the correct answer no matter what order the reactions happen to occur in.
We showed that this kind of computation corresponds exactly to the class of functions with semilinear graphs.
We further showed that all functions in this class can be computed efficiently.

A work on chemical computation can stumble by attempting to shoehorn an ill-fitting computational paradigm into chemistry.
While our systematic construction may seem complex,
we are inspired by examples like those shown in Fig.~\ref{fig:examples} that appear to be good fits to the computational substrate.
While delineation of computation that is ``natural'' for a chemical system is necessarily imprecise and speculative, it is examples such as these that makes us satisfied that we are studying a form of natural chemical computation.

Our systematic construction (unlike the examples in Fig.~\ref{fig:examples}) relies on a carefully chosen initial context --- the ``extra'' molecules that are necessary for the computation to proceed.
Some of these species need to be present in a single copy (``leader'').
We left unanswered whether it may be possible to dispense with this level of control of the chemical environment.
We suspect this generalization would be non-trivial because the problem of generating a prescribed molecular count of a species from an uncontrolled context is computationally challenging (see e.g.~the ``leader election'' problem~\cite{aspnes2007introduction}).

In contrast to the CRN model discussed in this paper, which is appropriate for small chemical systems in which every single molecule matters,
classical ``Avogadro-scale'' chemistry is modeled using real-valued concentrations that evolve according to mass-action ODEs.
Moreover, despite relatively small molecular counts, many biological chemical systems are well-modeled by mass-action ODEs.
While the scaling of stochastic CRNs to mass-action systems is understood from a dynamical systems perspective~\cite{kurtz1972relationship},
little work has been done comparing their computational abilities.
There are hints that single/few-molecule CRNs perform a fundamentally different kind of computation.
For example, recent theoretical work has investigated whether CRNs can tolerate multiple copies of the network running in parallel finding that they can lose their computational abilities~\cite{CondonHMT12,condon2012reachability}.

Does our notion of deterministic computation have an equivalent in mass-action systems?
Consider what happens when the CRN shown in Fig.~\ref{fig:examples}(c) is considered as a mass-action reaction network, with (non-negative) real-valued inputs $[X_1]_0$, $[X_2]_0$ and output $[Y]_\infty$ (where we use the standard mass-action convention: $[\cdot]_0$ for the initial concentration, and $[\cdot]_\infty$ for the equilibrium concentration).
In the limit $t \rightarrow \infty$, the mass-action system will converge to the correct output amount of $[Y]_\infty = \max([X_1]_0, [X_2]_0)$, and moreover, output amount is independent of what (non-zero) rate constants are assigned to the reactions.
Thus one is tempted to connect the notion of deterministic computation studied here and the property of robustness to parameters of a mass-action system.
Parameter robustness is a recurring motif in biologically relevant reaction networks due to much evidence that biological systems tend to be robust to parameters~\cite{barkal1997robustness}.

However, the connection is not simple.
Consider the CRN shown in Fig.~\ref{fig:examples}(a).
In the mass-action limit it loses the ability of computing the floor function, but still computes $[Y]_\infty = [X]_0 / 2$ for real valued $[X]_0$, $[Y]_\infty$,
independent of reaction rates.
More interestingly, the CRN shown in Fig.~\ref{fig:examples}(b), when considered as mass-action reaction network, could converge to a different amount of $Y$ as $t \rightarrow \infty$, depending on the rate constants of the last two reactions and the input amounts.
Specifically, let $k_1$, $k_2$, and $k_3$ be the rate constants of the three reactions, respectively.
If $[X_1]_0 > [X_2]_0$ and $k_2 \leq k3 [X_2]_0 / ([X_1]_0 - [X_2]_0)$,
then $Y$ will go to $k_2/k_3 ([X_1]_0 - [X_2]_0)$ rather than $[X_2]_0$ as in Fig.~\ref{fig:examples}(b).
In all other cases, the output will correctly match the function in the figure.
(This can be verified by determining the steady states of the system and then determining the stability of each one as a function of the initial concentrations and rate constants.)
The cause of the disagreement between stochastic and mass-action instances of this CRN can be identified with the ``type I'' deviant effect demarcated by Samoilov and Arkin~\cite{samoilov2006deviant}.

\paragraph{Acknowledgements.}
We thank Damien Woods and Niranjan Srinivas for many useful discussions,
Monir Hajiaghayi for pointing out a problem in an earlier version of this paper,
and anonymous reviewers for helpful suggestions.

\bibliographystyle{plain}
\bibliography{tam}

\begin{thebibliography}{10}

\bibitem{angluin2006passivelymobile}
Dana Angluin, James Aspnes, Zo\:{e} Diamadi, Michael Fischer, and Ren\'{e}
  Peralta.
\newblock Computation in networks of passively mobile finite-state sensors.
\newblock {\em Distributed Computing}, 18:235--253, 2006.
\newblock 10.1007/s00446-005-0138-3.

\bibitem{angluin2006fast}
Dana Angluin, James Aspnes, and David Eisenstat.
\newblock Fast computation by population protocols with a leader.
\newblock {\em Distributed Computing}, pages 61--75, 2006.

\bibitem{AngluinAE06}
Dana Angluin, James Aspnes, and David Eisenstat.
\newblock Stably computable predicates are semilinear.
\newblock In {\em PODC}, pages 292--299, 2006.

\bibitem{aspnes2007introduction}
James Aspnes and Eric Ruppert.
\newblock An introduction to population protocols.
\newblock {\em Bulletin of the European Association for Theoretical Computer
  Science}, 93:98--117, 2007.

\bibitem{barkal1997robustness}
N.~Barkal and S.~Leibler.
\newblock Robustness in simple biochemical networks.
\newblock {\em Nature}, 387(6636):913--917, 1997.

\bibitem{cardelli2012cell}
L.~Cardelli and A.~Csik{\'a}sz-Nagy.
\newblock The cell cycle switch computes approximate majority.
\newblock {\em Scientific Reports}, 2, 2012.

\bibitem{cardelli2011strand}
Luca Cardelli.
\newblock Strand algebras for {DNA} computing.
\newblock {\em Natural Computing}, 10(1):407--428, 2011.

\bibitem{condon2012reachability}
A.~Condon, B.~Kirkpatrick, and J.~Ma{\v{n}}uch.
\newblock Reachability bounds for chemical reaction networks and strand
  displacement systems.
\newblock 7433:43--57, 2012.

\bibitem{CondonHMT12}
Anne Condon, Alan Hu, J\'{a}n Manuch, and Chris Thachuk.
\newblock Less haste, less waste: On recycling and its limits in strand
  displacement systems.
\newblock {\em Journal of the Royal Society Interface}, 2:512--521, 2012.

\bibitem{CooSolWinBru09}
Matthew Cook, David Soloveichik, Erik Winfree, and Jehoshua Bruck.
\newblock Programmability of chemical reaction networks.
\newblock In Anne Condon, David Harel, Joost~N. Kok, Arto Salomaa, and Erik
  Winfree, editors, {\em Algorithmic Bioprocesses}, pages 543--584. Springer
  Berlin Heidelberg, 2009.

\bibitem{Gillespie77}
Daniel~T. Gillespie.
\newblock Exact stochastic simulation of coupled chemical reactions.
\newblock {\em Journal of Physical Chemistry}, 81(25):2340--2361, 1977.

\bibitem{hjelmfelt1991chemical}
A.~Hjelmfelt, E.D. Weinberger, and J.~Ross.
\newblock Chemical implementation of neural networks and turing machines.
\newblock {\em Proceedings of the National Academy of Sciences},
  88(24):10983--10987, 1991.

\bibitem{riedelDSP2012}
Hua Jiang, Marc Riedel, and Keshab Parhi.
\newblock Digital signal processing with molecular reactions.
\newblock {\em {IEEE} Design and Test of Computers}, 2012.
\newblock to appear.

\bibitem{kurtz1972relationship}
T.G. Kurtz.
\newblock The relationship between stochastic and deterministic models for
  chemical reactions.
\newblock {\em The Journal of Chemical Physics}, 57(7):2976--2978, 1972.

\bibitem{magnasco1997chemical}
Marcelo~O. Magnasco.
\newblock Chemical kinetics is {T}uring universal.
\newblock {\em Physical Review Letters}, 78(6):1190--1193, 1997.

\bibitem{Presburger30}
Moj\.zesz Presburger.
\newblock U\"ber die vollst\"andigkeit eines gewissen systems der arithmetik
  ganzer zahlen.
\newblock In {\em welchem die Addition als einzige Operation hervortritt.
  Compte Rendus du I. Congrks des Mathematiciens des pays Slavs, Warsaw}, pages
  92--101, 1930.

\bibitem{samoilov2006deviant}
M.S. Samoilov and A.P. Arkin.
\newblock Deviant effects in molecular reaction pathways.
\newblock {\em Nature biotechnology}, 24(10):1235--1240, 2006.

\bibitem{soloveichik2009robust}
D.~Soloveichik.
\newblock Robust stochastic chemical reaction networks and bounded tau-leaping.
\newblock {\em Journal of Computational Biology}, 16(3):501--522, 2009.

\bibitem{SolCooWinBru08}
David Soloveichik, Matthew Cook, Erik Winfree, and Jehoshua Bruck.
\newblock Computation with finite stochastic chemical reaction networks.
\newblock {\em Natural Computing}, 7(4):615--633, 2008.

\bibitem{SolSeeWin10}
David Soloveichik, Georg Seelig, and Erik Winfree.
\newblock {D}{N}{A} as a universal substrate for chemical kinetics.
\newblock {\em Proceedings of the National Academy of Sciences}, 107(12):5393,
  2010.

\bibitem{zavattaro2008termination}
Gianluigi Zavattaro and Luca Cardelli.
\newblock Termination problems in chemical kinetics.
\newblock {\em CONCUR 2008-Concurrency Theory}, pages 477--491, 2008.

\end{thebibliography}


\end{document}